\pdfoutput=1
\documentclass[11pt]{article} 

\usepackage{epsfig} 
\usepackage{amssymb,amsmath,amsthm,amscd}
\usepackage{latexsym} 
\usepackage{cancel}
\usepackage{xcolor}
\usepackage{amssymb}

\usepackage[margin=1.05in]{geometry}
\usepackage{soul}
 
 
\newtheorem{thm}{Theorem}[section] 
 
\newtheorem{cor}[thm]{Corollary} 
\newtheorem{lem}[thm]{Lemma}

\theoremstyle{definition}

\theoremstyle{remark} 
\newtheorem{rem}{Remark}[section]  
\def\eqref#1{(\ref{#1})} 

\newcommand{\bbR}{{\mathbb{R}}}

%
\newcommand {\mat}      [1] {\left[\begin{array}{#1}}
\newcommand {\rix}          {\end{array}\right]}
\newcommand {\de}      [1] {\left|\begin{array}{#1}}
\newcommand {\nt}          {\end{array}\right|}

%
%
%
\newcommand{\bstar}       {\begin{eqnarray*}}
\newcommand{\estar}       {\end{eqnarray*}}
\newcommand{\eqn}       {\begin{eqnarray}}
\newcommand{\enn}       {\end{eqnarray}}
\newcommand{\eq}[1]   {\begin{equation}\label{#1}}
\newcommand{\en}      {\end{equation}}



\begin{document}  
\begin{titlepage}  
\title{Geometric Mean of Concentrations and Reversal Permanent Charge in Zero-Current Ionic Flows via Poisson-Nernst-Planck Models}  
\author{Hamid Mofidi\footnote{Department of Mathematics, University of Iowa, Iowa City, IA 52242, ({\tt hamid-mofidi@uiowa.edu}).}}
\date{\today} 
\end{titlepage}

\maketitle    

\begin{abstract} This work examines the geometric mean of concentrations and its behavior in various situations, as well as the reversal permanent charge problem,  the charge sharing seen in x-ray diffraction. Observations are obtained from analytical results established using geometric singular perturbation analysis of classical Poisson-Nernst-Planck models.
 For ionic mixtures of multiple ion species Mofidi and Liu [{\em SIAM J. Appl. Math. {\bf 80} (2020), 1908-1935}]  centered two ion species with unequal diffusion constants to acquire a system for determining the reversal potential and reversal permanent charge. They studied the reversal potential problem and its dependence on diffusion coefficients, membrane potential, membrane concentrations, etc. Here we use the same approach to study the dual problem of reversal permanent charges and its dependence on other conditions. We consider two ion species with positive and negative charges, say Ca$^+$ and Cl$^-$, to determine the specific conditions under which the permanent charge is unique. Furthermore, we  investigate the behavior of geometric mean of concentrations for various values of transmembrane potential and permanent charge.

 \end{abstract} 

\noindent
{\bf Key words.} Ionic flows, PNP, fast-slow systems, concentrations, permanent charge

\section{Introduction.} 
\setcounter{equation}{0}
The nervous system is a too complicated and intricate part of an animal that is especially critical for transmitting signals between different body parts. It recognizes environmental changes that affect the body, then
works together with other body systems to respond to detected changes. It contains a large number of neurons that are electrically excitable cells. Neurons send electric signals to each other through thin fibers called axons, which generates chemicals known as neurotransmitters to be delivered at synapses. This electric signal, propagating along the axon, is a variation of polarization of transmembrane electrostatic potential called an action potential. An action potential is induced by the opening and closing many ion channels distributed on the axon membrane. Ion channels,  proteins embedded in membranes,  provide a major pathway for cells to communicate with each other and with the outside to transform signals and to conduct group tasks (\cite{BNVHEG, Eis00, Hil01, Hille89}).
 The essential structure of an ion channel is its shape and its permanent charge.  The shape of a typical channel could be approximated as a cylindrical-like domain.
Within an ion channel, amino acid side chains are distributed mainly over a ``short" and ``narrow" portion of the channel, with acidic side chains contributing negative charges and basic side chains providing positive charges. It is specific of side-chain distributions, which is referred to as the permanent charge of the ion channel.  The function of channel structures is to select the types of ions and to facilitate the diffusion of ions across cell membranes. 
 
The multi-scale feature of the problem with multiple physical parameters enables the system to have high flexibility and to show rich phenomena/behaviors (\cite{BKSA09, Eis}). On the other hand, the same multi-scale characteristic with multiple physical parameters presents a remarkably demanding task for anyone to derive meaningful information from experimental data, also given the fact that the internal dynamics cannot be discerned with the present technique.

To reveal mechanisms of perceived biological phenomena and explore new aspects, the role of mathematical analysis is inevitable. There have been some successes recently in probing Poisson-Nernst-Planck (PNP) models for ionic flows through ion channels \cite{EL07, ELX15, JEL17, JLZ15, Liu05, Liu09, LX15, PJ}. 
Centering specific critical characteristics of the biological systems, PNP models serve suitably for analysis and numerical simulations of ionic flows. One can acquire PNP systems as diminished models from molecular dynamic models, Boltzmann equations, and variational principles \cite{Bar, HEL10, HFEL12, SNE01}. 
There are various types of PNP models:\\ 
\noindent~(i) {\em The classical PNP treats dilute ionic mixtures, where no ion-to-ion interactions are involved.}\\
(ii) {\em The hard-sphere PNP reflects volume exclusive by employing ions as hard-spheres. }\\
More sophisticated models have also been studied in \cite{BKSA09, CEJS95, EHL10, SR81}, etc. 
It is challenging, though, to achieve analytical or computational results from complicated models.

 In this work, we are engaged in geometric mean of concentrations and reversal permanent charges that are determined by zero total currents.  We study the connection of these quantities with the membrane potentials or diffusion constants. The total current $I=I(V, Q)$ depends on the transmembrane potential $V$ and the permanent charge $Q$. For fixed transmembrane potential $V$, a reversal permanent charge $Q=Q_{rev}(V)$  is a charge that generates zero current $ I(V, Q_{rev}(V))=0$.
We employ the classical PNP model and consider a cylinder-like channel to fulfill the basic understanding of plausible effects of general diffusion coefficients in ionic channels.

To appreciate the significance of permanent charges in ionic channels, we emphasize that permanent charges in ionic channels perform the role of doping profiles in semiconductor devices. Doping gives the charges what acid and base side chains provide in ionic channels.
  Both ionic channels and semiconductor devices employ atomic-scale constructions to regulate macroscopic flows from one pool to another. Holes and electrons are the cations and anions of semiconductors, respectively. Ions usually flow as quasi-particles flow in semiconductors that depends on controlling movement and diffusion of quasi-particles of charge in transistors and integrated circuits.
 Doping is the process of adding impurities into primary semiconductors to strengthen its electrical, optical, and structural features \cite{BEMP01, Rous90, Warn01}.  

The role of diffusion constants, $\mathcal{D}_j$'s is also essential.
 The authors of \cite{ELX15} explored the problem of determining reversal permanent charges for the case when all diffusion constants are equal. However, the identical diffusion constants case is degenerate, known from the biological perspective. 
 The problem with unequal diffusion coefficients has been considered in some works. In \cite{BKSA09}, the authors discussed how mobilities and their spatial inhomogeneities are affected by other parameters.  In \cite{BK18},  the authors carried a perturbation inquiry from a time-independent and spatially homogeneous equilibrium solution. 
Two time scales of the dynamics are classified from the $O(\epsilon)$ terms. In particular, for the equal diffusion coefficient, the authors show that the diffusion process for $O(\varepsilon)$ terms does not occur -- a vital effect of unequal ionic mobilities. 
  In \cite{HBRR18},  the authors reviewed the cases with unequal mobilities by calculations of a wholly nonlinear electrokinetic model. They recognized the appearance of a steady long-range field due to unequal mobilities.

 In this work, we are mainly inspired by the effect of unequal diffusion coefficients and boundary concentrations on the geometric mean of concentrations and reversal permanent charges for the zero-current problem.
In \cite{ML19}, the authors used the geometric singular perturbation (GSP) framework developed in \cite{EL07, Liu05, Liu09} for analyzing PNP models for ionic flow to arrange a system of algebraic equations for the problem. The difference between $D_1$ and $D_2$ makes the system becomes a complex nonlinear algebraic system that is further reduced to two nonlinear equations that appeared to work satisfactorily and allow one to explore the zero-current problem. We utilize the same structure in this paper.

To underline the leading portions of this document (as well as some in \cite{JLZ15, ML19, MEL20}), we first desire to stress that employing the geometric analysis enables one to express and classify quantities and properties that are crucial to biology, and additionally to present the quantitative and qualitative perception and predictions.  In this work, we show a derivation of a mathematical system for the zero-current problem that we use to learn reversal permanent charge and the geometric mean of concentrations and their dependence on the other parameters like membrane potential, boundary concentrations, and diffusion constants.
Some numerical simulations have been provided throughout the text to support the theoretical conclusions and give the interested reader a sharp comprehension of the claims.

Throughout the paper, we obtain the numerical results from the algebraic systems \eqref{G1G2Sys} and  \eqref{G} that have been obtained from the governing system \eqref{Matching}.  The nonlinear algebraic systems are solved by \textsc{Matlab}\textsuperscript{\textregistered}(Version 9.5) function {\em fsolve} that uses the trust-region dogleg algorithm, that is based on the interior-reflective Newton method defined in \cite{CL96}.

\medskip

This paper is constructed as follows. The classical PNP model for ionic flows is recalled in Section \ref{sec-PNP} to set the stage for analyses in the next sections. 
In Section \ref{Sec-GSPonPNP}, we apply the GSP theory on the PNP system, with zero-current, to convert the BVP to a connecting system to end up with a nonlinear algebraic system of equations, called the matching system. 
In Section \ref{sec-ZeroCur}, we reduce the matching system to two nonlinear equations. In particular, we analyze the geometric mean of concentrations and reversal permanent charge. We review some concluding remarks in Section \ref{ConSec}.


\subsection{ PNP Systems for Ion Channels.}\label{sec-PNP}
\setcounter{equation}{0}
The PNP equations have been simulated and computed to a considerable extent \cite{ CE, ChK, BCE, HCE, IR}.  From those simulations, one can see that mathematical boundary conditions, i.e., macroscopic reservoirs, need to be incorporated in the mathematical formulation to describe the actual behavior of channels \cite{GNE, NCE}.
For an ionic mixture of $n$ ion species, the PNP model is, for $k=1,2,..., n$,
\begin{align}\label{PNP}
\begin{split} 
 \text{Poisson:} \quad & \nabla \cdot\Big( \varepsilon_r(\overrightarrow{X}) \varepsilon_0 \nabla \Phi \Big) = -e_0 \Big( \sum_{s=1}^n z_s C_s + \mathcal{Q}(\overrightarrow{X})\Big), \\
 \text{Nernst-Planck:} \quad & \partial_t C_k + \nabla \cdot  \overrightarrow{\mathcal{J}}_k =0, \quad -  \overrightarrow{\mathcal{J}}_k =\dfrac{1}{k_BT} \mathcal{D}_k(\overrightarrow{X}) C_k \nabla \mu_k,
 \end{split}
\end{align} 
where $\overrightarrow{X} \in \Omega$ (which is a three-dimensional cylindrical-like domain)  representing the channel of length $\hat{L} ~\mbox{nm} (= \hat{L} \times 10^{-9}\mbox{m})$,  $\varepsilon_r(\overrightarrow{X})$ is the  relative dielectric coefficient (with unit 1), 
 $\varepsilon_0 \approx 8.854 \times 10^{-12}~\mbox{Fm}^{-1}$ is the vacuum permittivity, $e_0 \approx 1.602\times 10^{-19}\mbox{C}$ (coulomb) is the elementary charge, ${\cal Q}(\overrightarrow{X})$ represent the permanent charge density of the channel (in $\mbox{M} = \mbox{Molar}=\mbox{{mol}/{L}}$ ),  $k_B\approx 1.381 \times 10^{-23} \mbox{JK}^{-1}$  is the Boltzmann constant, $T$ is the absolute temperature ($T\approx 273.16~ \mbox{K}=$kelvin, for water),  and $\Phi$ is the electric potential (with the unit $\text{V}=\text{Volt}=\text{JC}^{-1}$). For  the $k$-th ion species, $C_k$ is the concentration (with unit   $\mbox{M}$), $z_k$ is  the valence (the number of charges per particle with unit $1$), and $\mu_k$ is the electrochemical potential (with unit $\text{J}=\text{CV}$). The flux density $ \overrightarrow{{\cal J}}_k(\overrightarrow{X})$ (with unit $\mbox{mol }\mbox{m}^{-2}\mbox{s}^{-1}$) is the number of particles across each cross-section in per unit time, ${\cal D}_k(\overrightarrow{X})$ is the diffusion coefficient (with unit $\mbox{m}^2/\mbox{s}$), and $n$ is the number of distinct types of ion species (with unit $1$).  

Since ion channels have thin cross-sections comparative to their lengths, three-dimensional PNP systems can be reduced to quasi-one-dimensional models (\cite{LW10}). The quasi-one-dimensional steady-state PNP model is, for $k = 1, 2, ..., n,$
\begin{align}\label{1dPNP}
\begin{split} 
 \frac{1}{ \mathcal{A}(X)}  \frac{d}{dX}\left({\varepsilon}_r(X)\varepsilon_0 \mathcal{A}(X) \frac{d \Phi}{d X}\right)=&-e_0\left( \sum_{s=1}^nz_s C_s + \mathcal{Q}(X)\right), \\ 
\frac{d \mathcal{J}_k}{d X}  =0, \quad  -\mathcal{J}_k=& \dfrac{1}{k_BT}\mathcal{D}_k(X)\mathcal{A}(X)C_k\frac{d \mu_k}{dX},   
\end{split} 
\end{align} 
where  $X$ is the coordinate along the channel, $\mathcal{A}(X)$ is the area of cross-section of the channel over location $X$, and $\mathcal{J}_k$ (with unit $\mbox{mol/}\mbox{s}$)  is the total flux through the cross-section. 
We apply the following boundary conditions to the system (\ref{1dPNP}), for $k=1,2,\cdots, n$, 

\begin{equation}\label{BV} 
\Phi(0)={\cal V},  \quad C_k(0)=L_k>0; \quad \Phi(\hat{L})=0,\quad C_k(\hat{L})=R_k>0.
\end{equation} 

\noindent One often uses the electroneutrality conditions on the boundary concentrations because the solutions are made from electroneutral solid salts,
 
 \begin{align}\label{neutral}
 \sum_{s=1}^nz_sL_s= \sum_{s=1}^nz_sR_s=0.
 \end{align}
 
There is a sharp layer for electrical potential and ion concentrations near the encounter between the zero and non-zero permanent charges. This occurs near $x = a$ and $x = b$, in our calculations, where the permanent charge jumps (see how $Q(x)$ is defined in \eqref{Q}). Since our boundary conditions are forced at $x= 0 < a $ and at $ x=1 > b$, one can disregard the changes in the boundary concentrations. Nevertheless, our approach can regulate the boundary layers even when the boundary conditions do not fit the electroneutrality condition \cite{EL07}.

The electrochemical potential ${\mu}_k(X)$ for the $k$-th ion species consists of the ideal component ${\mu}_k^{id}(X)$ and the excess component ${\mu}_k^{ex}(X)$, i.e., ${\mu}_k(X)={\mu}_k^{id}(X)+{\mu}_k^{ex}(X)$. The  excess electrochemical potential ${\mu}_k^{ex}(X)$   accounts for the finite size effect of ions. 
It is needed whenever concentrations exceed, say 50~\text{mM}, as they almost always do in technological and biological situations and often reach concentrations $1\mbox{M}$ or more. 
The classical PNP model only deals with the ideal component ${\mu}_k^{id}(X)$, which disregards ions-size and displays the dilute ions-entropy in water. Dilute solutions tend to approach ideality as they proceed toward infinite dilution; that is,

\begin{equation}\label{Ideal}
{\mu}_k(X)={\mu}_k^{id}(X)= z_k e_0 \Phi(X)+k_B T \ln \frac{C_k(X)}{C_0},
\end{equation}

\noindent
where one may take $C_0=\max_{1\le k\le n}\big\{ L_k, R_k, \sup_{X\in[0,\hat{L}]}|{\cal Q}(X)|\big\}$ as the characteristic concentration of the problems. 


For given $\mathcal{V}$, $\mathcal{Q}(X)$, $L_k$'s and $R_k$'s,   if  $(\Phi(X), C_k(X), \mathcal{J}_k)$ is a solution of the boundary value problem (BVP) of (\ref{1dPNP}) and (\ref{BV}), then  the electric current $\mathcal{I} $ is   
 $
\mathcal{I}= e_0 \sum_{s=1}^nz_s \mathcal{J}_s.
 $
For an analysis of the boundary value problem (BVP) (\ref{1dPNP}) and (\ref{BV}), we work on a dimensionless form. Set 
${\cal D}_0=\max_{1\le k\le n}\{\sup_{X\in [0,\hat{L}]}{\cal D}_k(X)\}\;\mbox{ and }\; \bar{\varepsilon}_r=\sup_{X\in [0,\hat{L}]} \varepsilon_r(X)$. Then let

\begin{equation}\label{DimToDimless}
\begin{aligned}
&\varepsilon^2=\frac{\bar{\varepsilon}_r\varepsilon_0k_BT}{e_0^2\hat{L}^2C_0},\quad
  \hat{\varepsilon}_r(x)=\frac{\varepsilon_r(X)}{\bar{\varepsilon}_r},\quad x=\frac{X}{\hat{L}},\quad  h(x)=\frac{\mathcal{A}(X)}{\hat{L}^2},\quad D_k(x)=\frac{{\cal D}_k(X)}{{\cal D}_0},\\
  & Q(x)=\frac{{\cal Q}(X)}{C_0},  \quad \phi(x)=\frac{e_0}{k_BT}\Phi(X), \quad c_k(x)=\frac{C_k(X)}{C_0},\quad \hat{\mu}_k=\frac{1}{k_BT}\mu_k, \quad J_k=\frac{{\cal J}_k}{ \hat{L} C_0{\cal D}_0}. 
 \end{aligned}
\end{equation}

\noindent
In terms of the new variables, the  BVP    (\ref{1dPNP}) and (\ref{BV}) become, for $k=1,2,\cdots,n$,

\begin{align}\label{1dPNPdim}
\begin{split} 
 \frac{\varepsilon^2}{ h(x)}  \frac{d}{dx}\left(\hat{\varepsilon}_r(x)h (x)\frac{d}{dx}\phi\right)=&- \sum_{s=1}^nz_s c_s -  Q(x), \\ 
\frac{d J_k}{dx}  =0, \quad  -J_k=&h (x)D_k(x)c_k\frac{d  }{dx}\hat{\mu}_k ,   
\end{split} 
\end{align} 

\noindent
with the boundary conditions  

\begin{equation}\label{1dBV} 
\phi(0)=V=\frac{e_0}{k_BT}{\cal V},  \quad c_k(0)=l_k=\frac{L_k}{C_0}; \quad \phi(1)=0,\quad c_k(1)=r_k=\frac{R_k}{C_0}.
\end{equation}

\begin{rem} It is reasonable to assume that $\varepsilon>0$ in system \eqref{1dPNPdim} is small because if $ \hat{L}= 2.5 ~\mbox{nm}$ and $C_0 = 10 ~\mbox{M}$, then we obtain $\varepsilon \approx 10^{-3}$ \cite{EL17}. 
The smallness of $\varepsilon$ will later let us treat the system \eqref{slow} of the dimensionless problem as a singularly perturbed problem that can be analyzed by the GSP theory. As we will discuss in more detail later in Section \ref{Sec-GSPonPNP}, the GSP Theory employs the modern  invariant manifold theory from nonlinear dynamical system theory to examine the entire structure, i.e., the phase space portrait of the dynamical system, and should not be confused with the classical singular perturbation theory that uses, for example,  matched asymptotic expansions.
\qed
\end{rem}
As seen in Fig \ref{Fig-hx}, for any point $x\in [0,1]$, $h(x)$ (the right panel) is the cross-section area of the channel in a dimensionless form corresponding to $X\in [0,\hat{L}]$ in $\Omega$, (the left panel) which is the three-dimensional form of the channel.
 \begin{figure}[h]\label{Fig-hx}
	\centerline{\epsfxsize=5.5in \epsfbox{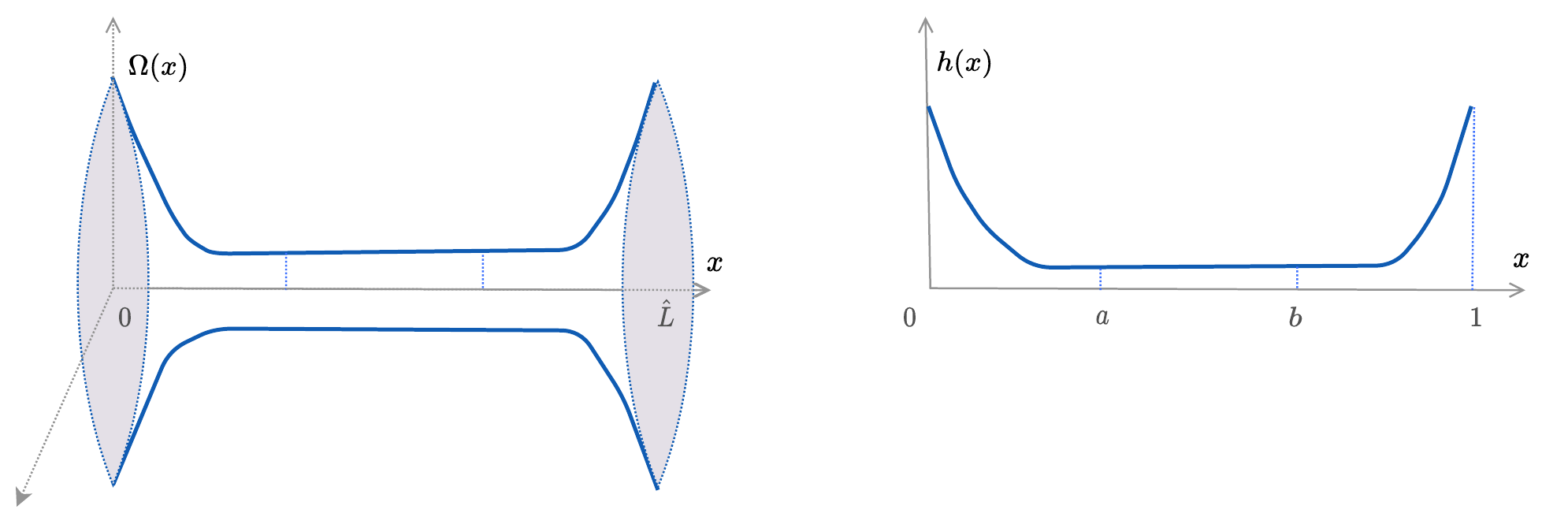} }\vspace*{-.1in}
	\caption{\em  The left panel shows $\Omega(x)$, which is a three-dimensional cylindrical like domain representing a channel of length $\hat{L}$; the right panel shows $h(x)$ which is the dimensionless form of the area of cross-section of the channel.  }
	\label{Fig-hx}
\end{figure}
\begin{rem}\label{rem-dim1}
 At this step, we would like to emphasize some tips:

(i)~ We first recall that dimensionless variables are suitable for explaining mathematical and general physical relations, and dimensional quantities are required to reveal how evolution has utilized those relations. We frequently switch from dimensional form to the dimensionless one and conversely throughout the script. 
The quantities $V,l,r, c_k,Q,  D_k$, and $J_k$ are dimensionless values corresponding to the dimensional quantities  $\mathcal{V},L,R, C_k,\mathcal{Q},  \mathcal{D}_k$, and $\mathcal{J}_k$, respectively, acquired from \eqref{DimToDimless}.  

(ii) The genuine dimensional forms of quantities have been applied for all figures throughout the paper, that is, 

$$
\begin{aligned}
C_k =& C_0 c_k~ (\mbox{M}),\qquad  \mathcal{Q}= C_0 Q~ (M), \qquad  
 \mathcal{J}_k = \hat{L} C_0 \mathcal{D}_0 J_k  ~(\mbox{mol}/{\mbox{s}}),
\end{aligned}
$$
 and we take
$C_0 = 10 \mbox{M}$,  $\hat{L} = 2.5\mbox{ nm}$ and $
\mathcal{D}_0 =2.032\times 10^{-9} ~ \mbox{m}^2/\mbox{s}.
$
Also, for diffusion constants $\mathcal{D}_k$ we consider combinations of the following \cite{LE14, Van93},  
\begin{equation}\label{Dif-Coef}
\begin{aligned}
& 1.334 \times 10^{-5} ~ \mbox{cm}^2/\mbox{s}~ \text{for Na}^+,\hspace*{.06in} \qquad 
 2.032 \times 10^{-5} ~ cm^2/s~ \text{for Cl}^-,\\
& 0.792 \times 10^{-5} ~cm^2/s~ \text{for Ca}^{2+},\qquad 0.923 \times 10^{-5} ~ \mbox{cm}^2/\mbox{s}~ \text{for CO}_3^-,\\
\end{aligned}
\end{equation}
where $Na^+, Cl^-, Ca^{2+},$ and $CO_3^{2-}$ respectively stand for Sodium, Chloride, Calcium and Carbonate. \qed
\end{rem}

  We now designate the case we will study in this paper. We will investigate  a simple  setup, the classical PNP model \eqref{1dPNPdim} with the boundary conditions \eqref{1dBV},  and ideal electrochemical potential \eqref{Ideal}.
More clearly, we assume
{\em \begin{itemize}
 \item[(A0)]  The ionic mixture consists of two ion species  with valences $z_1=-z_2=1$;
  \item[(A1)] $D_k(x)=D_k$  for $k=1,2$ is a constant and  $\hat{\varepsilon}(x)=1$;
\item[(A2)] Electroneutrality boundary conditions (\ref{neutral}) hold;
\item[(A3)] The permanent charge $Q$ is piecewise constant defined as, 
 \begin{align}\label{Q}
 Q(x)=\left\{\begin{array}{ll}
 Q_1=Q_3=0, & x\in (0,a)\cup (b,1),\\
 Q_2, & x\in (a,b),
 \end{array}\right.
 \end{align}   
 where $Q_2$ is a constant.
 \end{itemize}}

  \section{Application of Geometric Singular Perturbation Theory to  the PNP problem.}\label{Sec-GSPonPNP}
  \setcounter{equation}{0}
A common struggle for nonlinear differential equations is that it is usually impractical to develop specific solutions since they are very diverse, and examining solutions depends on the problems. Time scale separation is a phenomenon that appears in many physical systems like chemical reactions, particles in fluids, etc. In many applications, systems are naturally formulated in fast-slow systems, that are typically high-dimensional systems of nonlinear ordinary differential equations. The GSP Theory is a robust device to analyze multiple scale problems. It relies on advances of invariant manifold theory from nonlinear dynamical system theory and is an alternative and complementary theory to the classical matched asymptotic theory for singular perturbations.

\subsection{Singularly Perturbed Problems: Fast-Slow Systems.}\label{Sec-fast-slow}
One may presume that singularly perturbed problems are more complex than regularly perturbed problems. However, there are significant benefits of singularly perturbed problems over the regularly perturbed problems. One is the approximate decoupling of the full singularly perturbed problems. In general, IVP of singularly perturbed systems is exponentially (in $\varepsilon$) sensitive in initial conditions, but BVP is better behaved.
 There is no globally admitted classification to group all perturbation problems into categories of regular and singular. Nonetheless, some formal and informal tactics are listed below (\cite{Kuehn15}).
 
A differential equation problem involving a small parameter $0 < \varepsilon \ll 1 $ can be called a singular perturbation problem under one of the following definitions:\vspace*{-.1in}
\begin{itemize}
\item[1.] The asymptotic series is not a power series in $\varepsilon$, or if it is, the power series has a vanishing radius of convergence.\vspace*{-.12in}

\item[2.] The solution does not converge uniformly as $\varepsilon \to 0$ to a singular solution for $\varepsilon = 0$.\vspace*{-.12in}

\item[3.] Substituting a power series expansion in $\varepsilon$ yields problems of a ``different type'' from that of the original differential equation.\vspace*{-.12in}

\item[4.] Substituting a (regular) power series expansion in $\varepsilon$ fails, i.e., a problem is singular if it is not regular.
\end{itemize}
A fast-slow vector field (or $(m, n)-$ fast-slow system) is a system of ordinary differential equations taking the form
\begin{equation}\label{fs1}
\begin{aligned}
 \frac{dx}{d\tau}&=~ \overset{.}{x}= f(x,y, \varepsilon),\\
\varepsilon\frac{dy}{d\tau}&=  \varepsilon\overset{.}{y}= g(x,y, \varepsilon),
\end{aligned}
\end{equation}
where $f : \mathbb{R}^m \times \mathbb{R}^n \times \mathbb{R} \to \mathbb{R}^m$, $g : \mathbb{R}^m \times \mathbb{R}^n \times \mathbb{R} \to \mathbb{R}^n$, and $0 < \varepsilon \ll 1$. 
 Furthermore, the $x$ variables are called {\em slow variables}, and the $y$ variables are called {\em fast variables}. Setting $t =\frac{\tau}{\varepsilon}$ gives the equivalent form 
\begin{equation}\label{fs2}
\begin{aligned}
\frac{dx}{dt}&= {x'}= \varepsilon f(x,y, \varepsilon),\\
\frac{dy}{dt}&=  {y'}= g(x,y, \varepsilon).
\end{aligned}
\end{equation}
We refer to $t$ as the {fast time scale} and to $\tau$ as the {slow time scale}.

  The differential-algebraic equation obtained by setting $\varepsilon = 0$
in the formulation of the slow time scale \eqref{fs1} is called the {slow subsystem} or {slow vector field} or {limiting slow system}:
\begin{equation}\label{s0}
\begin{aligned}
\overset{.}{x}&= f(x,y, 0),\quad 0= g(x,y, 0).
\end{aligned}
\end{equation}
The flow generated by \eqref{s0} is called the {slow flow}.
Furthermore, the parameterized system of ODEs obtained by setting $\varepsilon = 0$
in the formulation of the fast time scale \eqref{fs2} is called the {fast subsystem} or {fast vector field} or {limiting fast system}:
\begin{equation}\label{f0}
\begin{aligned}
x'&= 0,\quad y'= g(x,y, 0).
\end{aligned}
\end{equation}
The flow generated by \eqref{f0} is called the {fast flow}.
 The limiting slow system is also referred to as the {reduced problem} and its flow as the {reduced flow}. The limiting fast system is also referred to as the {layer equations} or the {layer problem}.
We call the set
\begin{equation}\label{Z0-SManif}
\mathcal{Z}_0= \big\{(x,y) \in \mathbb{R}^m \times \mathbb{R}^n : g(x,y,0) = 0 \big\},
\end{equation}
the {critical set}, or {slow manifold} or {critical manifold}  if $\mathcal{Z}_0$ is a submanifold of $\mathbb{R}^m \times \mathbb{R}^n$.

 The relation between equilibrium points of the fast flow and the critical manifold $\mathcal{Z}_0$ is particularly simple. 
Any orbit (or a portion of it) of the limiting slow system is called a singular slow orbit.
Similarly, any orbit (or a portion of it) of the limiting fast system is called a singular fast orbit. By a singular orbit of $\varepsilon > 0$ system, we mean that it is a continuous curve in the phase space that is a union of singular slow orbits and singular fast orbit.

In summary, for $\varepsilon > 0$, the slow system \eqref{fs1} and fast system \eqref{fs2} are equivalent;
their limiting versions \eqref{s0} and \eqref{f0} are totally different but are complement to each other
with lower dimensions. In addition to being lower dimensions, as mentioned above, the limiting subsystems
often correspond to ideal physical conditions that are easier to analyze in principle. Therefore, the main objective of the GSP theory is what information from the limiting systems can be lifted to the full system, particularly how limiting fast dynamics and limiting slow dynamics are interplayed.

 \subsection{Geometric Singular Perturbation Theory.}\label{Sec-GSP1}
 In applying the nonlinear dynamical system theory, particularly the invariant manifold theory, to the study of singularly perturbed problems, one tries to understand essential structures of the phase portrait. A general systematic approach goes as follows.
We recall that the slow manifold $\mathcal{Z}_0$ is the set of equilibria of the limiting fast system \eqref{f0}. The linearization at each point $p$ on the slow manifold $\mathcal{Z}_0$ is

\begin{equation}\label{NHmatrix}
\left(\begin{aligned}
&\qquad 0\qquad\qquad 0\\
&D_xg(p,0)\quad D_yg(p,0)
\end{aligned}\right).
\end{equation}
Therefore, $\lambda_0 = 0$ is always an eigenvalue of multiplicity $n$. We call $\lambda_0$ the trivial eigenvalue and all other $m$ eigenvalues of \eqref{NHmatrix} the nontrivial eigenvalues.
Assume that the number of nontrivial eigenvalues in the right complex half-plane, on the imaginary axis, and in the left complex half-plane are $m^u$, $m^c$, and $m^s$ respectively. We denote the corresponding unstable, center, and stable eigenspaces by $E^u$, $E^c$, and $E^s$ with
$$
\dim E^u =m^u,\quad \dim E^c =m^c+n,\quad \dim E^s =m^s.$$
The set $\mathcal{Z}_0$ is called {\em normally hyperbolic} if the $n\times n$ matrix $(D_xg)(p,0)$ of first partial derivatives with respect to the fast variables has no eigenvalues with zero real part for all $p \in \mathcal{Z}_0$.



\begin{thm}({Fenichel's First and Second Theorems}) (\cite{ Hek, Jones95})\\
Suppose $S_0$ is a compact normally hyperbolic  submanifold (possibly with boundary) of the critical manifold $\mathcal{Z}_0$ of \eqref{fs1} and that $f,g \in C^r ~(r <\infty)$, that is they are smooth. Then for $\varepsilon > 0$ sufficiently small:\\
(1) ~Fenichel's first theorem: There exists a manifold $S_{\varepsilon}$, $O(\varepsilon)$ close and diffeomorphic to $S_0$,  that is locally invariant under the flow of the full problem \eqref{fs1}.\\
(2) ~Fenichel's second theorem:  There exist manifolds $W^s(S_\varepsilon)$ and $W^u(S_\varepsilon)$, that are $O(\varepsilon)$  close and diffeomorphic to $W^s(S_0)$ and $W^u(S_0)$, respectively, and that are locally invariant under the flow of the full problem \eqref{fs1}.
\end{thm}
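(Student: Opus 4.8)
\emph{Proof strategy.} The plan is to derive both statements from the graph-transform / Lyapunov--Perron machinery for normally hyperbolic invariant manifolds, specialized to the slow--fast structure. I would work on the fast time scale \eqref{fs2}, on which, for $\varepsilon=0$, every point of $S_0$ is an equilibrium of the layer problem \eqref{f0}; it is convenient to adjoin the trivial equation $\varepsilon'=0$, so that the whole family $\{S_\varepsilon\}$ may be sought at once as a single invariant manifold of the extended system lying near $S_0\times\{0\}$, the $\varepsilon$-slice of which then returns $S_\varepsilon$. Because $S_0$ is only a submanifold \emph{with boundary}, I would first enlarge it slightly and multiply the vector field by a smooth cutoff that equals $1$ on $S_0$ and is supported in a tubular neighbourhood, arranged so that the modified field points strictly outward along the new boundary; $S_0$ then becomes \emph{overflowing invariant}, which is the hypothesis under which the persistence machinery applies. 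Any (locally) invariant manifold produced for the modified system agrees with an object of the original system wherever the cutoff equals $1$, and this is exactly the meaning of ``locally invariant'' in the statement.

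For the construction itself I would put \eqref{fs2} into Fenichel normal form in a tubular neighbourhood of (the enlargement of) $S_0$: coordinates $(u,a,b)$, with $u$ ranging over $S_0$ and $a,b$ along the unstable and stable fast directions supplied by the normal hyperbolicity of the fast block in \eqref{NHmatrix}, so that
\begin{equation}\label{prop:nf}
u'=\varepsilon F(u,a,b,\varepsilon),\qquad a'=A(u)\,a+\mathcal{R}_u(u,a,b)+\varepsilon G_u,\qquad b'=B(u)\,b+\mathcal{R}_s(u,a,b)+\varepsilon G_s,
\end{equation}
with the eigenvalues of $A(u)$ having real parts $\ge\alpha>0$ and those of $B(u)$ real parts $\le-\alpha<0$ uniformly in $u$, and $\mathcal{R}_u,\mathcal{R}_s$ vanishing to second order in $(a,b)$. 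I would seek $S_\varepsilon$ as a graph $(a,b)=(h^u(u,\varepsilon),h^s(u,\varepsilon))$ with $h^u(\cdot,0)\equiv h^s(\cdot,0)\equiv0$. Requiring the graph to be flow-invariant turns \eqref{prop:nf} into a fixed-point equation for $h=(h^u,h^s)$, which I would solve by the Lyapunov--Perron integral method (equivalently, a graph transform): the hyperbolic gap $\pm\alpha$ in the fast directions, set against the $O(\varepsilon)$ tangential rate along $u$, makes the associated operator a contraction on a small ball of Lipschitz sections once $\varepsilon$ is small, yielding a unique $h$ and hence $S_\varepsilon$. Smooth dependence of the fixed point on the data gives $S_\varepsilon\in C^r$, while $h(\cdot,0)\equiv0$ together with $C^1$ dependence on $\varepsilon$ gives $h=O(\varepsilon)$, so that $S_\varepsilon$ is $O(\varepsilon)$-close and diffeomorphic to $S_0$; this proves (1).

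For (2) I would rerun the same machinery with the fast fibres made explicit. Since $S_\varepsilon$ is $O(\varepsilon)$-close to $S_0$, normal hyperbolicity --- an open condition --- persists, so near $S_\varepsilon$ the fast directions still split into a contracting block governed by $B$ and an expanding block governed by $A$. I would build $W^s(S_\varepsilon)$ as the graph, over $S_\varepsilon$ together with the stable fast directions, of a function solving a Lyapunov--Perron equation whose kernel is built from the contracting semigroup generated by $B(u)$ --- equivalently, as the set of points in the tubular neighbourhood whose forward orbit stays in it and converges to $S_\varepsilon$ --- and $W^u(S_\varepsilon)$ symmetrically, by reversing time and using $A(u)$. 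Setting $\varepsilon=0$ in these constructions reproduces exactly $W^s(S_0)=\bigcup_{p\in S_0}W^s(p)$ and the analogous $W^u(S_0)$ for the layer problem \eqref{f0}, and continuity of the fixed points in $\varepsilon$ delivers the asserted $O(\varepsilon)$-closeness, diffeomorphism and local invariance.

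The step I expect to be the main obstacle is not the contraction estimates but the bookkeeping around the boundary: making the overflowing-invariant modification precise, checking that it disturbs neither normal hyperbolicity nor the \emph{uniformity} of the bounds $\pm\alpha$, and then translating ``invariant for the modified, extended system'' cleanly back into ``locally invariant for the original system'' on the interior. A closely related point requiring care is the rate (``$r$-normal hyperbolicity'') condition behind the $C^r$ conclusion: one must note that all tangential expansion and contraction along $S_\varepsilon$ is $O(\varepsilon)$ whereas the normal rates stay bounded away from $0$ uniformly, so that for each fixed finite $r$ the inequality comparing the normal rates with the $r$-th power of the tangential rates holds once $\varepsilon$ is small --- which is precisely why one obtains $C^r$ for every $r<\infty$ rather than $C^\infty$.
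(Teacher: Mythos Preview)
The paper does not prove this theorem at all: it is stated with citations to \cite{Hek, Jones95} as background from geometric singular perturbation theory, and the paper simply invokes it in order to analyze the PNP boundary value problem. So there is no ``paper's own proof'' to compare your proposal against.

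That said, your sketch is a faithful outline of the standard route to Fenichel's results --- extend by $\varepsilon'=0$, modify near the boundary to obtain overflowing invariance, pass to Fenichel normal form, and run a graph transform / Lyapunov--Perron contraction using the uniform normal spectral gap against the $O(\varepsilon)$ tangential drift --- and is in line with the treatments in the references the paper cites. Your identification of the two delicate points (the boundary modification and the $r$-normal hyperbolicity rate condition) is also accurate. For the purposes of this paper, however, you should simply cite the theorem rather than attempt to reprove it.
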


Consider an $(m,n)-$fast-slow system \eqref{fs1}
 and suppose the critical manifold $\mathcal{Z}_0$ in \eqref{Z0-SManif} is normally hyperbolic. Since $\mathcal{Z}_0$ is normally hyperbolic, it follows that $D_y g|_{\mathcal{Z}_0}$ is invertible. Hence, the implicit function theorem locally admits that $\mathcal{Z}_0$ in \eqref{Z0-SManif} becomes $\mathcal{Z}_0 = \big\{(x,h(x)) \in \mathbb{R}^{m+n} \big\}$, which is a graph, where $h : \mathbb{R}^n \to \mathbb{R}^m$ is a map so that $g(x,h(x), 0) = 0$.
To state the Exchange Lemma, we need the following terminology. Two smooth (sub)manifolds $M_1$ and $M_2$ of $\mathbb{R}^k$ intersect transversally at a point $q$ if $T_qM_1$ and $T_qM_2$ together spans $ T_q\mathbb{R}^k = T_qM_1 + T_qM_2.
$
In this case, the intersection $N=M_1\cap M_2$ is a smooth submanifold, and
$
\dim N = \dim M_1 + \dim M_2 - \dim \mathbb{R}^k.
$
Furthermore, if $M_1$ is a submanifold of $U$, then $U$ and $M_2$ intersect transversally too.
Also, $C^1$ perturbations of $M_1$ and $M_2$ still intersect transversally. 

Jones and Kopell (\cite{Jones95, JK94}), with deep insight, extracted an extremely useful consequence, collectedly called Exchange Lemma. They viewed the normally hyperbolic slow manifold together with its invariantly foliated stable and unstable manifolds as a device and tested what the device does to an invariant manifold as the invariant manifold enters a neighborhood of the slow manifold. 
Recall the setup of the singularly perturbed problem in the standard form, i.e. equations \ref{fs1}---\ref{f0}. Moreover, assume that the slow manifold $\mathcal{Z}_0$ is normally hyperbolic, and assume that $k$ eigenvalues of the $n\times n$ matrix $g_y(x, h(x), 0)$ have positive real part and
$l$ eigenvalues of $g_y(x, h(x), 0)$ have negative real part with $k + l = n $. Let $U$ be a neighborhood of $\mathcal{Z}_0$ within which the invariant foliation structures holds. Let $M_\varepsilon$ be an invariant manifold of \eqref{fs1} and \eqref{fs2} for $\varepsilon >0$ and $\dim M_\varepsilon = k +\sigma$ with $ 1 \leq \sigma \leq m$. The limiting manifold
$M_0$ is invariant under the limiting fast system \eqref{f0}. Furthermore, we assume\\
(i). $M_0$ intersects $W^s(\mathcal{Z}_0) \cup U$ transversally. Let $N_0 =M_0 \cap W^s(\mathcal{Z}_0)$, then
$$
\dim N_0 = \dim M_0 + \dim W^s(\mathcal{Z}_0) - \dim(\mathbb{R}^{m+n}) = (k +\sigma) + (m+l) - (m+n) = \sigma.
$$
(ii). The $\omega-$limit set $\omega (N_0)$ of the limiting fast system is a $(\sigma -1)-$dimensional submanifold of $\mathcal{Z}_0$. One should note that $\omega (N_0)$ is nonempty because $\sigma - 1 \geq 0$.\\
(iii). The limiting slow flow is not tangent to $\omega (N_0)$. We remark that $\omega (N_0)$ is not open in $\mathcal{Z}_0$ because $\sigma -1 < m$.\\
The latter statement indicates that under the limiting slow flow denoted by dot, $\omega (N_0)\cdot(0,\tau)$
for any $\tau > 0$ is a submanifold of $\mathcal{Z}_0$ of dimension $\sigma = \dim \omega (N_0) + 1$.

\begin{thm} ({Exchange Lemma for normally hyperbolic slow manifold})(\cite{Jones95})\\
Assume (A1), (A2) and (A3) hold. Let $\tau_1 > \tau_0 >0$ be given. Then, for $\varepsilon > 0$ small, a portion of $M_\varepsilon$ is $C^1~O(\varepsilon)-$close to $W^u\big(\omega (N_0)\cdot(\tau_0,\tau_1)\big)\cap U$.
Note that, $$\dim \big(\omega (N_0)\cdot(\tau_0,\tau_1)\big)= \dim \omega(N_0)+1 = \sigma,$$ and hence,
$$
W^u\big(\omega (N_0)\cdot(\tau_0,\tau_1)\big) = \dim \omega(N_0)+k = \sigma + k = \dim M_\varepsilon.
$$
We remark that all conditions are required in limiting slow and fast systems, but the conclusion is for $\varepsilon > 0$ small.
\end{thm}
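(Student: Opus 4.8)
\medskip\noindent\textbf{Proof idea (sketch).}
The plan is to prove the statement in Fenichel normal-form coordinates on the neighborhood $U$ of $\mathcal{Z}_0$, by controlling separately the position and the tangent planes of $M_\varepsilon$ as it traverses $U$, and then patching these estimates over the thin annular portion of $M_\varepsilon$ whose passage realizes the prescribed slow-time window $(\tau_0,\tau_1)$.

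First I would use Fenichel's First and Second Theorems, together with a straightening of the invariant foliations of $S_\varepsilon$, to bring \eqref{fs2} into the normal form
\begin{equation*}
a' = A(a,b,y,\varepsilon)\,a,\qquad b' = B(a,b,y,\varepsilon)\,b,\qquad y' = \varepsilon\big(m(y,\varepsilon)+R(a,b,y,\varepsilon)\big)
\end{equation*}
on $U$, where $a\in\mathbb{R}^k$ coordinatizes the unstable directions, $b\in\mathbb{R}^l$ the stable directions, $y\in\mathbb{R}^m$ the slow directions, $\mathcal{Z}_0\cap U=\{a=0,\ b=0\}$, $W^u(\mathcal{Z}_0)\cap U=\{b=0\}$, $W^s(\mathcal{Z}_0)\cap U=\{a=0\}$, the remainder $R$ vanishes on $\{a=0\}\cup\{b=0\}$, and there is $\alpha>0$ with $\operatorname{Re}\operatorname{spec}A\ge\alpha$ and $\operatorname{Re}\operatorname{spec}B\le-\alpha$ throughout $U$. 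Here the limiting slow flow on $\mathcal{Z}_0$ is $\dot y=m(y,0)$; hypothesis (ii) makes $\omega(N_0)$ a $(\sigma-1)$-dimensional submanifold of $\{a=0,b=0\}$, and hypothesis (iii) (non-tangency of $m(\cdot,0)$ to $\omega(N_0)$) makes $\omega(N_0)\cdot(\tau_0,\tau_1)$ a genuine $\sigma$-dimensional submanifold.

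Next comes the $C^0$ (position) estimate. By hypothesis (i), $M_0$ meets $W^s(\mathcal{Z}_0)=\{a=0\}$ transversally along $N_0$, of dimension $(k+\sigma)+(m+l)-(m+n)=\sigma$; for $\varepsilon$ small, $M_\varepsilon$ enters $U$ $C^1$-close to $M_0$, so near $N_0$ its $a$-component is comparable to the distance from $W^s$. Along any orbit one has $|b(t)|\lesssim e^{-\alpha t}|b(0)|$, $|a(t)|\gtrsim e^{\alpha t}|a(0)|$ and $\dot y=O(\varepsilon)$, so a point of $M_\varepsilon$ lying at distance $\sim e^{-\alpha\tau_j/\varepsilon}$ from $W^s$ needs fast time $\sim\tau_j/\varepsilon$ to reach $\partial U$; in that time $y$ drifts by $\sim\tau_j$ in slow time and $|b|$ shrinks to $\sim e^{-\alpha\tau_j/\varepsilon}$, which is exponentially smaller than $\varepsilon$. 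Hence the flow image of the annular region of $M_\varepsilon$ at distances between $\sim e^{-\alpha\tau_1/\varepsilon}$ and $\sim e^{-\alpha\tau_0/\varepsilon}$ from $W^s$ lies within $O(\varepsilon)$ of $\{b=0\}$, and following the base points under $\dot y=m(y,0)$ places them over $\omega(N_0)\cdot(\tau_0,\tau_1)$; since $\dim W^u\big(\omega(N_0)\cdot(\tau_0,\tau_1)\big)=\sigma+k=\dim M_\varepsilon$, the two manifolds match in dimension.

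The main obstacle is the $C^1$ (tangent-plane) estimate: that $T_pM_\varepsilon$ converges to the tangent space of $W^u\big(\omega(N_0)\cdot(\tau_0,\tau_1)\big)$ as $\varepsilon\to0$ along the passage. For this I would track $TM_\varepsilon$ as a flow on the Grassmann bundle of $(k+\sigma)$-planes — equivalently, present $M_\varepsilon$ over the box as a graph in the straightened coordinates and derive the matrix Riccati equation for its slopes — and argue that the contraction rate $-\alpha$ in $b$ set against the expansion rate $+\alpha$ in $a$ and the $O(\varepsilon)$ slow rate drive the slopes in the $b$-direction to zero while the slopes in the $a$- and slow-directions relax to those of $W^u$ of the drifted $\omega$-limit set. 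The delicate point is uniformity of this convergence over the passage time $\sim\tau_j/\varepsilon\to\infty$, and it is precisely here that the spectral gap, the smallness of $\varepsilon$, and hypotheses (ii)--(iii) (ensuring the target is a smooth manifold of dimension $\sigma+k=\dim M_\varepsilon$ and that the slope equations remain nondegenerate) are all needed. Combining the $C^0$ and $C^1$ estimates over the annular portion of $M_\varepsilon$ then yields the asserted $C^1$ $O(\varepsilon)$-closeness to $W^u\big(\omega(N_0)\cdot(\tau_0,\tau_1)\big)\cap U$, completing the proof.
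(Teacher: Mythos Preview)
The paper does not provide a proof of this theorem at all: it is stated as a cited result from \cite{Jones95} (the Exchange Lemma of Jones and Kopell), and immediately after the statement the text simply continues with commentary on the $C^1$ closeness and on how the lemma is applied to the PNP connecting problem. There is therefore no ``paper's own proof'' to compare your proposal against.

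That said, your sketch is a faithful outline of the standard proof in the literature: pass to Fenichel normal form so that the stable/unstable/center directions are straightened, obtain the $C^0$ estimate by exploiting the exponential contraction in $b$, expansion in $a$, and $O(\varepsilon)$ drift in $y$ over fast times of order $\tau/\varepsilon$, and then upgrade to the $C^1$ estimate by tracking $(k+\sigma)$-planes either via the induced flow on the Grassmannian or, equivalently, via the matrix Riccati equation for the graph slopes. The identification of the delicate point --- uniformity of the tangent-plane convergence over the long passage time, controlled by the spectral gap together with hypotheses (ii)--(iii) --- is correct and is exactly where the original arguments (Jones \cite{Jones95}, Jones--Kopell \cite{JK94}) do the real work. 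If anything, one could note that the original Jones--Kopell proof tracks $(k+\sigma)$-forms rather than planes directly, which linearizes the problem and avoids the Riccati nonlinearity; but the Grassmannian/Riccati viewpoint you adopt is equivalent and is the one used in several later expositions.
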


The conclusion on $C^1$ closeness stresses that not only the manifolds are $O(\varepsilon)-$close in
the $C^0$ sense, but also their tangent spaces are $O(\varepsilon)-$close to each other.

Standard utilization of the Exchange Lemma is connecting orbits (BVP), heteroclinic and homoclinic orbits. In connecting orbits, $M_\varepsilon$ could be the flow of the set $B_L$ defining the boundary condition on the left. One then requires to discern whether $M_\varepsilon$ intersect the other set $B_R$ defining the boundary condition on the right.
That is what occurs to the PNP system, as we will address in the next section.

  \subsection{Application of GSP Theory to  the BVP (\ref{1dPNPdim}) and (\ref{1dBV}).}\label{Sec-GSPBVP}
  
We rewrite the classical PNP system \eqref{1dPNPdim} into a standard form of singularly perturbed systems and turn the boundary value problem to a connecting problem.
For more details one can read the paper \cite{EL07}.  
Denote the derivative with respect to $x$ by  overdot  and introduce $u=\varepsilon \dot \phi$.
  System~\eqref{1dPNPdim} becomes,   for $k=1,2,$ 
 \begin{align}\label{slow}\begin{split} 
  \varepsilon\dot \phi=&   u,\quad
  \varepsilon \dot u=  -\sum_{s=1}^{ 2}z_sc_s-Q(x) - \varepsilon \frac{h_x(x)}{h(x)} u, \\
  \varepsilon \dot c_k=&-z_kc_ku - \varepsilon  \frac{J_k}{D_k h(x)},  \quad \dot J_k= 0.
 \end{split}
 \end{align}
 
  System~\eqref{slow} will be treated as a dynamical system with the phase space
  $\bbR^{7}$ and the independent variable $x$ is viewed as time for the dynamical system.


A  GSP structure to examine the BVP of the classical PNP systems was formed first in \cite{EL07, Liu05} for ionic compounds with two ion species. The model of ion channel properties involves coupled nonlinear differential equations. The GSP theory enables one to make a conclusion about the BVP for $\varepsilon>0$ small from the data of $\varepsilon=0$ limit systems. Another unique structure is that a state-dependent scaling of the independent variable transforms the nonlinear limit slow system to a linear system with constant coefficients. The coefficients depend on unknown fluxes to be determined as part of the whole problem, which is mathematical evidence for its strong dynamics. Consequently, the existence, multiplicity, and spatial profiles of the singular orbits-- zeroth order in $\varepsilon$ approximations of the BVP-- are reduced to a nonlinear system algebraic equations that involve all relevant quantities together.   This system of nonlinear algebraic equations accurately draws the physical framework of the problem.
Furthermore, it confirms that all quantities interact with each other, and we will show in this paper quantitatively how some of those transactions occur.

With its extensions to include some of the effects of ion size, this geometric framework has produced several outcomes that are fundamental to ion channel properties \cite{JL12, JLZ15, Liu18, ML19, SL18, ZEL}.
 The interested readers are referred to the papers mentioned above for more details on the GSP framework for PNP and concrete applications to ion channel problems.


For simplicity, we use the letters $l$, $r$ and $Q_0$ where
$ l_1=l_2=l$, $r_1=r_2=r$, $Q_2=2Q_0$. 
 Following the framework in \cite{Liu09},
 because of the jumps of the permanent charge $Q(x)$ in \eqref{Q} at $x = a$ and $x = b$, we divide the formation of a singular orbit on the interval $[0, 1]$ into that on three subintervals $[0, a],~[a, b]$ and $[b, 1]$ to convert the boundary value problem to a connecting problem. We denote $C=(c_1,  c_2)^T$ and $J=(J_1, J_2)^T$, and preassign values of $\phi$ and $C$ at $x_a=a$ and $x_b=b$:
\[\phi(x_j)=\phi^{j}\;\mbox{ and }\; C(x_j)=C^{j}\;\mbox{ for }\; j\in \{a,b\}.\]
Now for  $j\in\{l,a,b,r\}$, let $B_j$ be the subsets of the phase space $\bbR^{7}$ defined by
$$
B_j=\Big\{(\phi, u, C, J, w):\;\phi=\phi^{j},\; C=C^{j},\; w=x_j\Big\}.
$$
 
Note that the sets $B_l$ and $B_r$ are associated to the boundary condition in \eqref{1dBV} at $x=0$ and $x=1$ respectively.
 Thus, the  BVP (\ref{1dPNPdim}) and (\ref{1dBV}) is equivalent to the following connecting orbit problem: finding an orbit of (\ref{slow}) from $B_l$ to $B_r$ (See Figure \ref{Fig-PNPdiag}).
The construction would be accomplished by finding first a singular connecting orbit -- a union of limiting slow orbits and limiting fast orbits,
and then applying the exchange lemma to show the existence of a connecting orbit for $\varepsilon>0$ small. For the problem at hand, the construction of a singular orbit consists of one singular connecting orbit from $B_l$ to $B_a$, one from $B_a$ to $B_b$, and one from $B_b$ to $B_r$ with a matching of $(J_1,J_2)$ and $u$ at $x=a$ and $x=b$ (see \cite{Liu09} for details).


  \begin{figure}[h]\label{Fig-PNPdiag}
	\centerline{\epsfxsize=5.5in \epsfbox{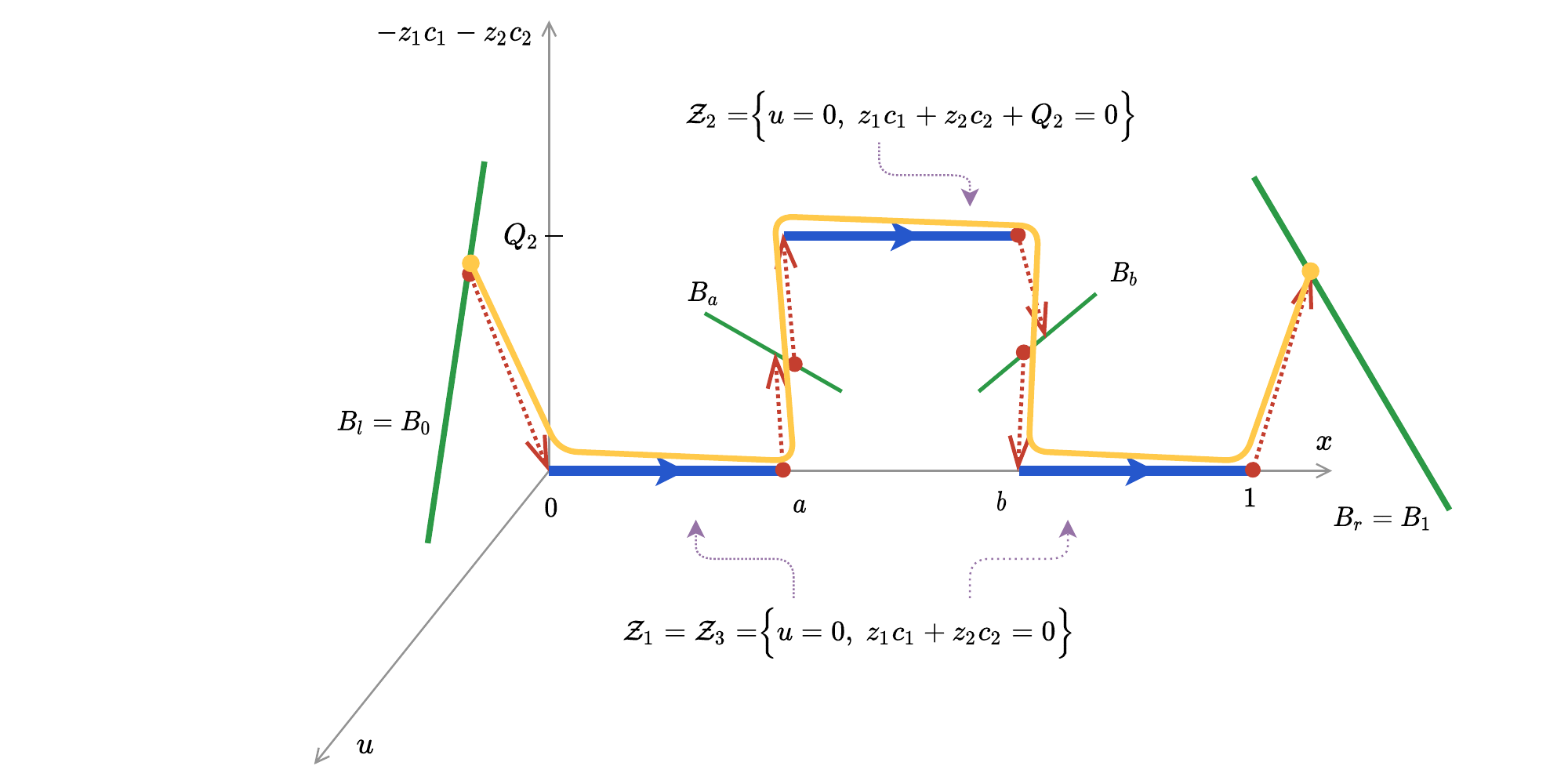} }\vspace*{-.1in}
	\caption{\em An illustration of a singular connecting orbit projected to the space of $(u;  z_1c_1 + z_2c_2; x)$. The solid line from the left boundary $B_l$ to the right boundary $B_r$ is the $O(\varepsilon)$ estimate of the connected problem obtained by Exchange Lemma (see \cite{EL07}). }
	\label{Fig-PNPdiag}
\end{figure}
  

By setting $\varepsilon=0$ in system (\ref{slow}), we  get the slow manifold, for $k=1,2,3$,
 \[{\mathcal Z}_k=\Big\{u=0,\; z_1c_1+z_2c_2+ Q_k=0\Big\}.\] 
  In terms of the  independent variable $\xi= x/\varepsilon$, 
  we obtain the fast system of~\eqref{slow}, for $k=1,2$, 
 \begin{align}\label{fast}\begin{split} 
      & \phi'=   u,\;
    u'=  -z_1c_1-z_2c_2 - Q_j- \varepsilon \dfrac{h_w(w)}{h(w)} u ,\\
   & c_k'=-z_kc_ku -\varepsilon \dfrac{J_k}{D_k h(w)}, \quad J'= 0, \quad  w'=\varepsilon,   
 \end{split}
 \end{align}
where   prime denotes the derivative with respect to   $\xi$.
 The limiting fast system is, for $k=1,2$,
\begin{align}\label{limfast}\begin{split} 
     \phi'=&   u,\quad
    u'=  -z_1c_1-z_2c_2  - Q_j, \quad
    c_k'=-z_kc_ku, \quad J'=0,\quad  w'=0.
 \end{split}
 \end{align}

The slow manifold ${\mathcal Z}_k$ is precisely the set of equilibria of (\ref{limfast}) with   $\dim{\mathcal Z}_k=5$.  For the linearization of (\ref{limfast}) at each point on ${\mathcal Z}_kj$,  there are  $5$ zero eigenvalues associated to the tangent space of ${\mathcal Z}_k$ and the other two eigenvalues 
 are $ \pm\sqrt{z_1^2c_1+z_2^2c_2}$.
Thus, ${\mathcal Z}_k$ is normally hyperbolic (\cite{Fen, HPS}). We will denote the stable and unstable manifolds of ${\mathcal Z}_k$ by
$W^s({\mathcal Z}_k)$ and  $W^u({\mathcal Z}_k)$, respectively.  
Let $M^{[k-1,+]}$   be the collection of all forward orbits from  $B_{j_1}$ under the flow of \eqref{limfast}
and let  $M^{[k,-]}$   be the collection of all backward orbits from $B_{j_2}$, where $j_1, j_2$ are two of consecutive letters in the set $\{l,a,b,r\}$  (corresponding to $B_l, B_a, B_b, B_r$) where the place of letters is fixed. Then the set of forward orbits from $B_{j_1}$ to the corresponding
${\mathcal Z}_k$ is  $N^{[j_1,+]}=M^{[j_1,+]}\cap W^s({\mathcal Z}_k)$, and the set of backward orbits from $B_{j_2}$ to 
${\mathcal Z}_k$ is $N^{[j_2,-]}=M^{[j_2,-]}\cap W^u({\mathcal Z}_k)$. Therefore, the singular layer $\Gamma^{[j_1,+]}$ at $x_{j_1}$ satisfies  $\Gamma^{[j_1,+]}\subset  N^{[j_1,+]} $ and the singular layer $\Gamma^{[j_2,-]}$ at $x_{j_2}$ satisfies $\Gamma^{[j_2,-]}\subset N^{[j_2,-]}$.
 All the essential geometric objects are explicitly delineated in \cite{Liu09}.

The limiting fast (layer) dynamics preserve electrochemical potentials and do not depend on diffusion constants (\cite{Liu09}). 
Thus, we directly apply the results on the fast dynamics from \cite{ELX15}  and only point out that we consider $n=2$ ion species and need to keep 
$\phi^{a,l}$, $\phi^{a,m}$, $\phi^{b,m}$ and $\phi^{b,r}$ in this paper, while in \cite{ELX15}, the equality of diffusion constants $D_1=D_2$ concludes  $\phi^{a,l}=V$, $\phi^{a,m}=\phi^{b,m}$ (denoted by ${\cal V}^*$ there) and $\phi^{b,r}=0$.
 
  \begin{lem}\label{Lem-fast1}([The fast layer dynamics]) One has, 
  
\noindent  a.) over $x=a$ provides, for $k=1,2$,
  \begin{itemize}
  \item[(i)] relative to $(0,a)$ where $Q_1=0$,
 \[  z_1c_1^{a}e^{z_1(\phi^{a}-\phi^{a,l})}+ z_2c_2^{a}e^{z_2(\phi^{a}-\phi^{a,l})}=0,
 \quad c_k^{a,l}=c_k^{a}e^{z_k( \phi^{a}-\phi^{a,l})};\]
 \item[(ii)] relative to $(a,b)$ where $Q_2\neq 0$,
 \[  z_1c_1^{a}e^{z_1(\phi^{a}-\phi^{a,m})}+ z_2c_2^{a}e^{z_2(\phi^{a}-\phi^{a,m})} + Q_2=0, \quad c_k^{a,m}=c_k^{a}e^{z_k( \phi^{a}-\phi^{a,m})};\]
 \item[(iii)] the matching   $u_-^{a}=u_+^{a}$: \quad
 $c_1^{a,l} + c_2^{a,l} =c_1^{a,m} + c_2^{a,m} +Q_2(\phi^{a}-\phi^{a,m})$;
 \end{itemize}
 
 \noindent b.) over $x=b$ provides, for $k=1,2$,
  \begin{itemize}
  \item[(i)] relative to $(a,b)$ where $Q_2\neq 0$,
 \[  z_1c_1^{b}e^{z_1(\phi^{b}-\phi^{b,m})}+ z_2c_2^{b}e^{z_2(\phi^{b}-\phi^{b,m})} + Q_2=0,
 \quad c_k^{b,m}=c_k^{b}e^{z_k( \phi^{b}-\phi^{b,m})};\]
 \item[(ii)] relative to $(b,1)$ where $Q_3= 0$, 
 \[  z_1c_1^{b}e^{z_1(\phi^{b}-\phi^{b,r})}+ z_2c_2^{b}e^{z_2(\phi^{b}-\phi^{b,r})} =0, \quad c_k^{b,r}=c_k^{b}e^{z_k( \phi^{b}-\phi^{b,r})};\]
 \item[(iii)] the matching   $u_-^{b}=u_+^{b}$: \quad
 $c_1^{b,m} + c_2^{b,m} +Q_2(\phi^{b}-\phi^{b,m}) =c_1^{b,r} + c_2^{b,r}.$
 \end{itemize}
 \end{lem}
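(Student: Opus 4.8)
The plan is to obtain each displayed identity as an exact conserved-quantity statement for the limiting fast system \eqref{limfast}, together with a limiting argument for the matching conditions, following the layer-dynamics analysis in \cite{Liu09, ELX15}. Recall that along the fast flow the slow variables $J$ and $w$ are frozen (they satisfy $J'=0$, $w'=0$), and $Q_j$ is constant on each of the three subintervals, so on each layer we are dealing with the autonomous planar-plus system $\phi'=u$, $u'=-z_1c_1-z_2c_2-Q_j$, $c_k'=-z_kc_ku$.

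First I would record the two basic first integrals of \eqref{limfast}. From $c_k'=-z_kc_ku=-z_kc_k\phi'$ we get $(\ln c_k + z_k\phi)'=0$, hence $c_k e^{z_k\phi}$ is constant along each fast orbit; evaluating this between the slow-manifold endpoint (superscript $a$) and the boundary-adjacent endpoint (superscripts $a,l$ or $a,m$) yields exactly the relations $c_k^{a,l}=c_k^{a}e^{z_k(\phi^{a}-\phi^{a,l})}$ and $c_k^{a,m}=c_k^{a}e^{z_k(\phi^{a}-\phi^{a,m})}$, and similarly at $x=b$. Substituting these into the defining equation of the slow manifold $\mathcal Z_k$ (namely $z_1c_1+z_2c_2+Q_j=0$, evaluated at the landing point on $\mathcal Z_k$, where $u=0$) gives the first equation in each of (i) and (ii): for the $Q=0$ subintervals the sum $z_1c_1+z_2c_2$ must vanish at the slow-manifold end, forcing $z_1c_1^{a}e^{z_1(\phi^{a}-\phi^{a,l})}+z_2c_2^{a}e^{z_2(\phi^{a}-\phi^{a,l})}=0$, and for the $(a,b)$ subinterval the same computation with $Q_2$ present produces the $+Q_2$ term. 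Here one must be careful about which point sits on the slow manifold and which sits in $B_{j}$: the orbit $\Gamma^{[j_1,+]}\subset N^{[j_1,+]}=M^{[j_1,+]}\cap W^s(\mathcal Z_k)$ starts in $B_{j_1}$ and limits onto $\mathcal Z_k$, and the algebraic relation is obtained by equating the value of the first integral at the two ends, exactly as in \cite{Liu09}.

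Next, for the matching relations (iii) I would use the second first integral. Multiplying $u'=-(z_1c_1+z_2c_2)-Q_j$ by $u=\phi'$ and using $c_k' = -z_k c_k \phi'$, one checks that $\tfrac12 u^2 - (c_1+c_2) - Q_j\phi$ is constant along each fast orbit (the Hamiltonian-type energy of the layer flow; the $-(c_1+c_2)$ term comes from $\sum_k z_k c_k \phi' = -\sum_k c_k'$). Writing $u^{a}=0$ on the slow manifold and equating the energy at the two ends of the left layer over $(0,a)$ gives $\tfrac12(u_-^{a})^2 = (c_1^{a,l}+c_2^{a,l})$ type relations, and doing the same on the $(a,b)$-side gives $\tfrac12(u_+^{a})^2 = (c_1^{a,m}+c_2^{a,m}) + Q_2(\phi^{a}-\phi^{a,m})$. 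The Exchange-Lemma matching requirement that $u$ be continuous across $x=a$ (i.e.\ $u_-^{a}=u_+^{a}$, which is what makes the singular orbit a genuine connecting orbit rather than two disjoint pieces) then forces $c_1^{a,l}+c_2^{a,l} = c_1^{a,m}+c_2^{a,m}+Q_2(\phi^{a}-\phi^{a,m})$, which is b.)(iii)'s analogue at $a$; the computation at $x=b$ is identical with the roles of the $(a,b)$ and $(b,1)$ sides exchanged. Since $z_1=-z_2=1$ under (A0), all the $e^{z_k(\cdot)}$ factors are genuinely $e^{\pm(\cdot)}$, but I would keep the $z_k$ notation to match the statement.

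The routine part is the bookkeeping of signs and superscripts; the one genuine subtlety — and the step I expect to be the main obstacle to present cleanly rather than to prove — is justifying that these \emph{exact} layer identities are precisely the conditions that survive in the $\varepsilon\to0$ singular limit and glue correctly, i.e.\ that the landing points on $\mathcal Z_k$, the continuity of $u$ and of $J$ at $x=a,b$, and the jump in the $\phi$-derivative across the $Q$-discontinuity are all captured by equating first integrals at layer endpoints. This is exactly the content of the GSP/Exchange-Lemma construction recalled in Section~\ref{Sec-GSPBVP} and carried out in full in \cite{Liu09, ELX15}, so I would invoke those results for the geometric underpinning and present the proof here as the explicit evaluation of the two first integrals $c_ke^{z_k\phi}$ and $\tfrac12 u^2-(c_1+c_2)-Q_j\phi$ along the singular layers at $x=a$ and $x=b$, together with the $u_-=u_+$ matching.
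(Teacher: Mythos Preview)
Your approach is correct and is precisely the method carried out in \cite{Liu09, ELX15}; the paper itself does not give a proof at all but simply cites those references, noting only that the limiting fast dynamics are independent of the diffusion constants so the layer analysis from \cite{ELX15} applies verbatim. In that sense your write-up is more explicit than the paper's.

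One computational slip to fix: the second first integral of \eqref{limfast} is
\[
\tfrac12 u^2 - (c_1+c_2) + Q_j\phi,
\]
with a $+Q_j\phi$, not $-Q_j\phi$. Indeed, from $u' = -\sum_s z_s c_s - Q_j$ and $c_k' = -z_k c_k u$ one has $uu' = \sum_k c_k' - Q_j\phi'$, so it is $\tfrac12 u^2 - \sum_k c_k + Q_j\phi$ whose $\xi$-derivative vanishes. With your sign the derivative equals $-2Q_j u$, which is nonzero on the $(a,b)$ layer. Your intermediate ``type relations'' were vague enough that you still landed on the correct matching identity, but if you carry your stated integral through carefully you would get the wrong sign on the $Q_2(\phi^a-\phi^{a,m})$ term. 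Once the sign is corrected, evaluating the integral at the $B_a$ endpoint and at the slow-manifold landing point on each side, and equating $(u_-^a)^2=(u_+^a)^2$, gives exactly the relation in (iii); likewise at $x=b$. Everything else in your outline (the first integral $c_k e^{z_k\phi}$, the identification of which endpoint lies on $\mathcal Z_k$, and the appeal to the Exchange Lemma for the $\varepsilon>0$ justification) is correct and matches the cited references.
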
 
As stated in the Introduction, the degeneracy of equal diffusion coefficients arises in the slow dynamics. Diffusion is a phenomenon in which the spatial distribution of solute particles varies due to their potential energy. It is a standard process that works to remove differences in concentration and eventually turns a given mixture to a uniform structure state. The diffusion of uncharged particles can be described by the Fick's first law \cite{Fick}  through the equation $\partial_t c= \mathcal{D}\partial^2_{xx} c$, where $c$ is the concentration, $\mathcal{D}$ is the diffusion constant, and $t$ is time.
 The measurement of diffusion constants frequently involves measuring collections of simultaneous values of $t$, $c$, and  $x$. These evaluated values are then implemented to a solution of Fick's law to achieve the diffusion constants. 
There are many approaches and techniques to maintain diffusion constants of ions in aqueous solutions \cite{BF80, BHW66, GA82, LE14}.

Some kinds of selectivity depend on the non-equality of diffusion coefficients. Besides, many electrical phenomena wholly vanish. That is, the liquid coupling is zero when diffusion constants are identical. 
 Therefore, the equal diffusion constants case is degenerate.
On the other hand, everything becomes much more complicated (at least mathematically) when the diffusion constants are unequal.

Some ionic species' diffusion constants may alter from one technique to another, even when all other parameters are held untouched. However, experimental measurements are directed under isothermal circumstances to keep away from a deviation of $\mathcal{D}$ values.


In this part, we briefly recall, with different notation, the analysis on the slow layer from Section 2  of the paper \cite{ML19}. For zero current
  $I=z_1J_1 + z_2J_2 =0$ (so $J_1=J_2$), and further assumption $z_1=-z_2=1$, one has
$
 \frac{J_1}{D_1} - \frac{J_2}{D_2} = \frac{D_2 - D_1}{D_1D_2}J_1.
$
 Applying zero current condition in above, the limiting slow system becomes(\cite{ML19}),
\begin{align}\label{Dslow1}
\begin{split}
  \dot \phi=&  -\frac{(D_2-D_1)J_1}{D_1 D_2 h(\omega) (2c_1 +Q_j)},\quad
    \dot c_1  =-\frac{(D_2 + D_1) c_1+D_2Q_j }{D_1D_2 h(\omega) (2c_1 +Q_j)} J_1,\quad
\dot J_1=0,\quad \dot w=1.
\end{split}
 \end{align}
 The slow system \eqref{Dslow1} on $(0,a)$ and $(b,1)$  where $Q_1=Q_3=0$ and on $(a,b)$  where $Q_2\neq 0$ will result in the following Lemma.
 
 \begin{lem}\label{Lem-slow1}
 The slow dynamics over each interval (slow manifold) is as follows:
 
 \noindent (a) Over $(0,a)$ with $c_1(x) -c_2(x)=0$ the slow dynamics system gives,
\begin{align*}
c_1^{a,l} = l_1 - \frac{D_1+D_2 }{ 2D_1 D_2} J_1 H(a),\quad
\phi^{a,l} = V - \frac{D_1 - D_2}{D_1+D_2} \ln \dfrac{c_1^{a,l}}{l_1}.
\end{align*}
 
 \noindent (b) Over $(a,b)$ with $c_1(x) -c_2(x)+Q_2=0$ the slow dynamics system gives,
\begin{align*} 
&\phi^{b,m}= \phi^{a,m} + \dfrac{D_1 - D_2}{D_1 D_2} J_1y^*,\\
& c_{1}^{b,m}= e^{-\frac{D_1+D_2 }{D_1 D_2} J_1y^*}c_1^{a,m}+\dfrac{D_2Q_2}{D_1 + D_2}\big( e^{-\frac{D_1+ D_2}{D_1 D_2} J_1y^*}-1 \big),\\
&J_1= -D_1D_2\dfrac{2\big(c_{1}^{b,m} - c_{1}^{a,m} \big)-(\phi^{b,m} - \phi^{a,m})Q_2}{(D_1 + D_2)\big(H(b) - H(a) \big) } .
\end{align*}
 
 \noindent (c) Over $(b,1)$ with $c_1(x) -c_2(x)=0$ the slow dynamics system gives,
\begin{equation*}
 c_1^{b,r} = r_1 + \frac{D_1 +D_2  }{2D_1 D_2} J_1 \big(H(1) - H(b) \big),\quad
\phi^{b,r} =  \frac{D_1 -D_2}{D_1 + D_2} \ln \dfrac{r_1}{c_1^{b,r}}.
 \end{equation*}
 \end{lem}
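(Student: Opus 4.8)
The plan is to integrate the limiting slow system \eqref{Dslow1} explicitly over each of the three subintervals, using on each interval the appropriate algebraic constraint coming from the slow manifold $\mathcal{Z}_k$ (namely $c_1 - c_2 + Q_j = 0$, i.e. $2c_1 + Q_j = c_1 + c_2$). The key preliminary observation is that on $\mathcal{Z}_k$ the denominator $h(\omega)(2c_1 + Q_j)$ that appears in both $\dot\phi$ and $\dot c_1$ is the same, so after the state-dependent rescaling of the independent variable one is left with a linear constant-coefficient system in $(\phi, c_1)$ whose coefficients involve the (as yet unknown) flux $J_1$. Concretely, introduce the antiderivative $H(x) = \int_0^x \frac{ds}{h(s)}$, so that $\dot H = 1/h(\omega)$; then on each interval $\frac{d c_1}{dH}$ becomes an affine function of $c_1$ and $\frac{d\phi}{dH}$ is proportional to $\frac{1}{2c_1 + Q_j}$, which we rewrite using the chain rule as a function of $c_1$.

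First I would handle the two zero-charge intervals $(0,a)$ and $(b,1)$, where $Q_j = 0$ and the constraint reads $c_1 = c_2$, so $2c_1 + Q_j = 2c_1$. On $(0,a)$ the $c_1$-equation becomes $\frac{dc_1}{dH} = -\frac{D_1+D_2}{2D_1D_2}J_1$, a constant, which integrates immediately from the boundary value $c_1(0) = l_1$ to give $c_1^{a,l} = l_1 - \frac{D_1+D_2}{2D_1D_2}J_1 H(a)$. For $\phi$, divide the $\phi$-equation by the $c_1$-equation to eliminate $J_1$ and the $h$-dependence: $\frac{d\phi}{dc_1} = \frac{D_2 - D_1}{(D_1+D_2)c_1 \cdot 2} \cdot 2 = \cdots$, more precisely $\frac{d\phi}{dc_1} = -\frac{D_1-D_2}{D_1+D_2}\cdot\frac{1}{c_1}$, and integrating from $(l_1, V)$ to $(c_1^{a,l}, \phi^{a,l})$ gives the stated logarithmic formula. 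The interval $(b,1)$ is identical except that one integrates backward from $c_1(1) = r_1$, $\phi(1) = 0$, which flips a sign and yields the formulas in (c).

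Next I would treat the middle interval $(a,b)$, where $Q_2 \neq 0$ and the constraint is $c_1 - c_2 + Q_2 = 0$, so $2c_1 + Q_2 = c_1 + c_2$ and, crucially, the $c_1$-equation is now genuinely affine: $\frac{dc_1}{dH} = -\frac{(D_1+D_2)c_1 + D_2 Q_2}{D_1D_2}J_1$. Writing $y^* = H(b) - H(a)$ and solving this linear ODE from the initial data $(c_1^{a,m}, \phi^{a,m})$ at $x = a$ gives the exponential expression for $c_1^{b,m}$; the $\phi$-equation, being $\frac{d\phi}{dH} = -\frac{(D_2-D_1)J_1}{D_1D_2(c_1+c_2)} = -\frac{(D_2-D_1)J_1}{D_1D_2}\cdot\frac{1}{c_1 + c_2}$ with $c_1 + c_2 = 2c_1 + Q_2$, can be integrated either directly against $H$ after substituting the solved $c_1(H)$, or — more cleanly — by again dividing by the $c_1$-equation to get $\frac{d\phi}{dc_1} = \frac{D_1-D_2}{(D_1+D_2)c_1 + D_2Q_2}$ and integrating, which after using the relation between $c_1^{b,m}$ and $c_1^{a,m}$ collapses to the stated $\phi^{b,m} = \phi^{a,m} + \frac{D_1-D_2}{D_1D_2}J_1 y^*$. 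Finally, the third displayed identity in (b) is obtained by solving the $c_1^{b,m}$ formula for $J_1 y^*$, feeding that into the $\phi^{b,m}$ relation, and eliminating $y^*$ using $y^* = H(b) - H(a)$; alternatively one rearranges the two integrated relations to isolate $J_1$ directly.

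I expect the only real friction to be in the middle interval: one must be careful that the flux-matching requires $J_1$ to be treated as an unknown constant throughout (it is only pinned down later by the global matching system), and the algebra reconciling the exponential form of $c_1^{b,m}$ with the linear-in-$y^*$ form of the flux expression in the last line of (b) needs the identity $e^{-\alpha J_1 y^*} - 1$ to be re-expanded to first order is \emph{not} what is happening — rather, the exact exponential relation is inverted to express $J_1 y^*$ as a logarithm, and then substituted; keeping track of which representation is used where is the one place where a sign error or a factor of $D_1 + D_2$ can creep in. Everything else is routine separation of variables and matching of boundary data, all justified by Lemma \ref{Lem-slow1}'s hypothesis that we are on the normally hyperbolic slow manifolds $\mathcal{Z}_k$ where the constraints $c_1 - c_2 + Q_j = 0$ hold identically.
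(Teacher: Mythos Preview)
Your overall strategy --- integrate the limiting slow system \eqref{Dslow1} on each subinterval using the slow-manifold constraint --- is exactly what the paper (via the reference \cite{ML19}) does, and your treatment of the zero-charge intervals $(0,a)$ and $(b,1)$ is correct.

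There is, however, a genuine error on the middle interval. You write
\[
\frac{dc_1}{dH}=-\frac{(D_1+D_2)c_1+D_2Q_2}{D_1D_2}J_1
\]
and then set $y^*=H(b)-H(a)$. Neither is right: dividing \eqref{Dslow1} by $\dot H=1/h$ leaves the factor $1/(2c_1+Q_2)$ in place, so $dc_1/dH$ is \emph{not} affine in $c_1$. The equation becomes affine only after the further \emph{state-dependent} rescaling
\[
dy=\frac{dx}{h(x)\,(2c_1(x)+Q_2)}=\frac{dH}{2c_1+Q_2},
\]
which is precisely the ``state-dependent scaling of the independent variable'' the paper alludes to in Section~\ref{Sec-GSPBVP}. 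In that variable one gets $d\phi/dy=(D_1-D_2)J_1/(D_1D_2)$ (constant) and $dc_1/dy=-[(D_1+D_2)c_1+D_2Q_2]J_1/(D_1D_2)$ (affine), and $y^*$ is the total $y$-length of the middle interval, i.e.\ $y^*=\int_a^b \frac{dx}{h(x)(2c_1(x)+Q_2)}$, which depends on the solution and is \emph{not} $H(b)-H(a)$.

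This matters for your derivation of the third line of (b). You cannot ``eliminate $y^*$ using $y^*=H(b)-H(a)$'' because that identity is false. The clean route is to observe directly from \eqref{Dslow1} that
\[
\frac{d}{dH}\bigl(2c_1-Q_2\phi\bigr)=-\frac{(D_1+D_2)}{D_1D_2}J_1,
\]
a constant; integrating from $a$ to $b$ gives the $J_1$-formula in one line. (Equivalently, one can compute $H(b)-H(a)=\int_0^{y^*}(2c_1(y)+Q_2)\,dy$ using the solved $c_1(y)$ and then simplify, but the conserved-quantity argument is much shorter.) Once you replace your $H$-rescaling on $(a,b)$ by the $y$-rescaling and use this first integral for the flux relation, the rest of your argument goes through.
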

 
 \subsection*{\underline{Matching for Zero-current and Singular Orbits on $[0,1]$.}}
The last step to build a connecting orbit over the whole interval 
$[0,1]$ is to match the three singular orbits from the Lemmas
 \ref{Lem-fast1} and \ref{Lem-slow1} at the points $x=a$ and $x=b$. 
The matching conditions are $u_-^{a}=u_+^{a},~u_-^{b}=u_+^{b}$, and that $J_1(=J_2)$ needs to be equal on all three subintervals. Thus,
\begin{equation}\label{Matching}
\begin{aligned}
&{ c_1^{a}e^{(\phi^{a}-\phi^{a,l})}-c_2^{a}e^{-(\phi^{a}-\phi^{a,l})}= c_1^{b}e^{(\phi^{b}-\phi^{b,r})}- c_2^{b}e^{-(\phi^{b}-\phi^{b,r})} = 0,}\\
& { c_1^{a}e^{( \phi^{a}-\phi^{a,m})}- c_2^{a}e^{-( \phi^{a}-\phi^{a,m})}+ Q_2=0,}\\
&  { c_1^{b}e^{( \phi^{b}-\phi^{b,m})}- c_2^{b}e^{-( \phi^{b}-\phi^{b,m})}+ Q_2=0,}\\
&{2c_1^{a,l}}=c_1^{a}e^{( \phi^{a}-\phi^{a,m})} + c_2^{a}e^{{-}( \phi^{a}-\phi^{a,m})} +Q_2(\phi^{a}-\phi^{a,m}),\\
&{2c_1^{b,r}}=c_1^{b}e^{( \phi^{b}-\phi^{b,m})} + c_2^{b}e^{-( \phi^{b}-\phi^{b,m})} +Q_2(\phi^{b}-\phi^{b,m}),\\
&J_1=J_2 = -\dfrac{2D_1D_2(c_1^{a,l}- l_1)}{(D_1+D_2 ) H(a)} = -\dfrac{2D_1D_2(r_1- c_1^{b,r})}{(D_1 + D_2  )(H(1) - H(b))},\\
&\hspace*{.5in} = -D_1D_2\dfrac{2(c_1^{b,m}- c_1^{a,m}) - (\phi^{b,m} - \phi^{a,m})Q_2}{(D_1 +D_2 )(H(b)-H(a))},\\
&\phi^{b,m}= \phi^{a,m} + \dfrac{D_1 - D_2}{D_1 D_2} J_1y^*,\\
&c_{1}^{b,m}= e^{-\frac{D_1 + D_2}{D_1 D_2} J_1y^*}c_1^{a,m}+\dfrac{D_2 Q_2}{(D_1 +D_2) }\Big( e^{-\frac{D_1+ D_2}{D_1 D_2} J_1y^*}-1 \Big).
\end{aligned}
\end{equation}

\begin{rem} In \eqref{Matching}, the unknowns are: $\phi^{a},~\phi^{b},~ c_1^{a},~c_2^{a},~ c_1^{b},~c_2^{b},~J_1,~\phi^{a,m},~\phi^{b,m},~y^*$ and $Q_2$  that is, there are eleven unknowns that matches the total number of equations on \eqref{Matching}. \qed
\end{rem}





\section{Zero-Current Problems: Geometric Mean of Concentrations and Reversal Permanent Charge.}\label{sec-ZeroCur}
\setcounter{equation}{0}
This section investigates how some quantities, such as boundary concentrations and electric potential, work to make the current reverse. We analyze the outcomes for various diffusion constants to reveal the impacts of diffusion constants on the geometric mean of concentrations and reversal permanent charge and expose the opposites.


We encountered a nonlinear governing system in \eqref{Matching} that is difficult to analyze as it is.
In \cite{ELX15}, for equal diffusion constants, $D_k$'s, the investigation on  reversal permanent charges implemented for a general $n$. However, with the general $D_k$'s, the difficulty intensifies, as shown in \cite{ML19}, even for the case we will address in this work, where there are two ion species, i.e., $n=2$ with $z_1=-z_2=1$.


In \cite{EL07} and \cite{ML19}, without and with zero current assumptions, respectively, the authors offered two intermediate variables that support a notable reduction of the matching (\ref{Matching}):
\begin{equation}\label{A}
A= \sqrt{c_1(a)c_2(a)}, \quad B= \sqrt{c_1(b)c_2(b)}. 
\end{equation}
The variables $A$ and $B$ are the geometric mean of concentrations at $x=a$ and $x=b$ respectively.
One may consider $B$ as a function of $A$ rather than an independent variable. In fact, it can be seen that  $B=B(A) = \dfrac{1-\beta}{\alpha}(l-A)+r$. 
We can now reduce the Matching system \eqref{Matching} to a nonlinear system with two equations and two unknowns. We skip the redued step here, for zero current ${I}= J_1 -J_2 =0$, as  illustrated in detail in the Appendix of \cite{ML19}.

\begin{align}\label{G1G2Sys}
G_1(Q_0, A, \delta )=V\;\mbox{ and }\; G_2(Q_0,A,\delta )=0,
\end{align}

\noindent
where

\begin{equation}\label{G}
\begin{aligned}
G_1(Q_0, A, \delta )=& \delta  \Big(\ln\dfrac{S_a + \delta Q_0}{S_b + \delta  Q_0} + \ln\dfrac{l}{r}\Big)  - (1+\delta )\ln \dfrac{A}{B} + \ln \dfrac{S_a -Q_0}{S_b -Q_0},\\
G_2(Q_0, A, \delta )=& \delta Q_0\ln\dfrac{S_a+\delta Q_0}{S_b+\delta Q_0}-N.
\end{aligned}
\end{equation}	 
Furthermore,
\begin{equation}\label{Nab}
   S_a= \sqrt{Q_0^2+A^2}, \quad S_b= \sqrt{Q_0^2+B^2}, 
 \quad N = A-l  + S_a -S_b,
\end{equation}
and

\begin{equation}\label{al-be-thet}
\delta =\frac{D_2- D_1}{D_2 + D_1},\quad H(x)=\int_0^x\frac{1}{h(s)}ds, \quad \alpha=\frac{H(a)}{H(1)}, \quad \beta=\frac{H(b)}{H(1)}.
\end{equation}

 The function $H(x)$ is the ratio of the length with the cross-section area of the portion of the channel over $[0,x]$. The quantity $H(x)$ initially has roots in Ohm law for a uniform resistor's resistance. The quantities $\alpha$ and $\beta$, together with $Q_0$, are critical characteristics for the shape and permanent charge of the channel structure \cite{JLZ15}.

Partial derivatives of $G_1$ and $G_2$ with respect to $Q_0$ and $A$ are,
\begin{align}\label{parG1G2}\begin{split}
\partial_{A}G_1(A, Q_0,\delta) =& (1-\delta^2)Q_0\Big(\dfrac{1}{A(S_a+\delta Q_0)} +\frac{1-\beta}{\alpha} \dfrac{1}{B(S_b+\delta Q_0)}\Big),\\
\partial_{Q_0}G_1(A, Q_0,\delta) =&  \dfrac{(1-\delta^2)(S_a-S_b)}{(S_a+\delta Q_0)(S_b+\delta Q_0)} ,\\
\partial_{\delta} G_1 (A, Q_0,\delta) =&\ln \dfrac{S_a+\delta Q_0}{S_b+\delta Q_0}   + \dfrac{\delta Q_0}{S_a+\delta Q_0}-\dfrac{\delta Q_0}{S_b+\delta Q_0} +\ln \dfrac{l}{r} - \ln \dfrac{A}{B},\\
 \partial_A G_2 (A, Q_0,\delta)=& -\frac{1-\beta}{\alpha}\dfrac{B}{S_b+\delta Q_0} - \dfrac{A}{S_a+\delta Q_0} - \dfrac{\beta - \alpha}{\alpha},\\
\partial_{Q_0} G_2(A, Q_0,\delta) =&\delta \ln\dfrac{S_a+\delta Q_0}{S_b+\delta Q_0}+\dfrac{(1-\delta^2)Q_0(S_a-S_b)}{(S_a+\delta Q_0)(S_b+\delta Q_0)},\\
\partial_{\delta} G_2 (A, Q_0,\delta) =&Q_0\ln \dfrac{S_a+\delta Q_0}{S_b+\delta Q_0}   + \dfrac{\delta Q_0^2}{S_a+\delta Q_0}-\dfrac{\delta Q_0^2}{S_b+\delta Q_0}.
\end{split}
\end{align} 
 
It follows directly from (\ref{parG1G2}) that $G_1(Q_0, A, \delta)$ is increasing (decreasing) in $A$ as $Q_0$ is increasing (decreasing), and it is increasing (decreasing) in $Q_0$ or $\delta$ when $l-r$ increases (decreases). Moreover, $G_2$ always decreases in $A$ regardless of other values; however, its behavior with respect to $Q_0$ and $\delta$ is complicated. The preceding observations have been confirmed in Lemma 3.2 of \cite{ML19} in detail; however, the authors of that paper did not address how $G_2$ acts (with respect to $Q_0$ or $\delta$) when $\delta Q_0< 0$.
We develop and complete the Lemma in the following Theorem to serve us in the next tracks. In particular, it will help us later in Section \ref{revPC}, to establish the uniqueness of reversal potential.
\begin{thm}\label{Thm-G2} Assume that diffusion constants $D_1$ and $D_2$ are fixed: \medskip

\noindent (a) if $D_1 <D_2$, i.e. $\delta >0$, then there exists a  $Q_0^-<0$, so that $\partial_{Q_0} G_2(Q_0, A, \delta)$ has the same sign as that of $l-r$ when $Q_0>Q_0^-$, but it has the opposite sign when $Q_0 < Q_0^-$, and \medskip

\noindent(b) if $D_1 > D_2$, i.e. $\delta <0$, then there exists a  $Q_0^+>0$, so that $\partial_{Q_0} G_2(Q_0, A, \delta)$ has the opposite sign as that of $l-r$ when $Q_0<Q_0^+$, but it has the same sign when $Q_0 > Q_0^+$.
\end{thm}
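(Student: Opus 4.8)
The plan is to analyze the function
\[
\partial_{Q_0}G_2(Q_0,A,\delta)=\delta\ln\frac{S_a+\delta Q_0}{S_b+\delta Q_0}+\frac{(1-\delta^2)Q_0(S_a-S_b)}{(S_a+\delta Q_0)(S_b+\delta Q_0)}
\]
as a function of $Q_0$ (with $A$, $\delta$, and hence $B=B(A)$ fixed), and to locate its sign changes relative to the sign of $l-r$. First I would record some elementary facts: since $B=\frac{1-\beta}{\alpha}(l-A)+r$, the sign of $S_a-S_b$ is tied to the sign of $A^2-B^2$, and along the relevant solution branch one has (as already used in \cite{ML19}) that $S_a-S_b$ and $l-r$ carry the same sign; also $S_a+\delta Q_0>0$ and $S_b+\delta Q_0>0$ for all real $Q_0$ because $S_a=\sqrt{Q_0^2+A^2}>|Q_0|\ge|\delta Q_0|$ (and similarly for $S_b$), so both logarithm and denominator are well defined and positive throughout. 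Consequently the second term $\frac{(1-\delta^2)Q_0(S_a-S_b)}{(S_a+\delta Q_0)(S_b+\delta Q_0)}$ has the sign of $Q_0\cdot(l-r)$, and the first term $\delta\ln\frac{S_a+\delta Q_0}{S_b+\delta Q_0}$ has the sign of $\delta\cdot(S_a-S_b)=\delta\cdot\operatorname{sign}(l-r)\cdot|S_a-S_b|$, i.e. the sign of $\delta(l-r)$.

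The key reduction is that $\operatorname{sign}\partial_{Q_0}G_2$ depends only on the one-variable function obtained by dividing through by the (positive) quantity $\operatorname{sign}(l-r)$; set, say, $P(Q_0):=\operatorname{sign}(l-r)\cdot\partial_{Q_0}G_2$. For $\delta>0$ (case (a)): when $Q_0\ge 0$ both terms of $P$ are $\ge 0$ and at least one is strictly positive (unless $A=B$), so $P(Q_0)>0$; as $Q_0\to-\infty$ one checks via $S_a+\delta Q_0\sim\frac{A^2-B^2}{?}$-type expansions that the $\ln$ term behaves like $\delta\,(S_a-S_b)/|Q_0|\cdot\operatorname{sign}(l-r)$ tending to $0$ from the "good" side while the rational term is $\sim (1-\delta^2)(S_a-S_b)\operatorname{sign}(l-r)/Q_0\to 0^-$, and more precisely one shows the rational (negative) term dominates, so $P(Q_0)<0$ for $Q_0$ very negative. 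I would then prove $P$ is \emph{monotone} on $(-\infty,0]$, or at least has exactly one zero there: differentiate $P$ in $Q_0$ once more, or — cleaner — write $P$ as a difference of a strictly increasing and a controlled term and invoke the intermediate value theorem plus a sign-of-derivative argument to get a \emph{unique} crossing $Q_0^-<0$. For $Q_0>Q_0^-$, $P>0$, i.e. $\partial_{Q_0}G_2$ has the sign of $l-r$; for $Q_0<Q_0^-$, $P<0$, i.e. the opposite sign. Case (b), $\delta<0$, is handled by the symmetry $(\delta,Q_0)\mapsto(-\delta,-Q_0)$: under this substitution $S_a,S_b$ are unchanged, $S_a+\delta Q_0$ and $S_b+\delta Q_0$ are unchanged, the $\ln$ term flips sign twice (once from $\delta$, once from nothing — actually it is invariant), and $P$ transforms into its case-(a) counterpart with a reflected argument, producing the threshold $Q_0^+=-Q_0^->0$ with the stated reversed sign pattern.

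The main obstacle I anticipate is establishing that $P$ has \emph{exactly one} zero on the relevant half-line rather than merely at least one — i.e. the monotonicity/unique-crossing step. The function $P$ is a sum of a transcendental ($\ln$) term and a rational term, both vanishing at $Q_0=0$ only in the degenerate case $A=B$, and whose derivatives have somewhat unwieldy expressions; showing $P'$ does not change sign on $(-\infty,0)$ (or that $P$ is, say, convex there, or that $P/Q_0$ is monotone) will require a careful factorization. A workable route is to set $t=\delta Q_0$ and $s_a=\sqrt{t^2/\delta^2+A^2}$ etc., reduce to a single-variable inequality in $t<0$, and exploit that $\frac{d}{dt}\bigl[\ln(s_a+t)-\ln(s_b+t)\bigr]$ and $\frac{d}{dt}\bigl[\tfrac{t(s_a-s_b)}{(s_a+t)(s_b+t)}\bigr]$ have a common sign structure coming from $s_a'=t/(\delta^2 s_a)>$ or $<0$ comparisons; alternatively, one can avoid differentiating $P$ altogether by noting that the rational term is strictly monotone in $Q_0$ on $(-\infty,0]$ while the $\ln$ term is bounded, which already forces at most... no — that is not quite enough, so the honest fallback is the explicit derivative computation. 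I would present the unique-crossing claim as the technical heart, push the routine algebra into the sign analysis of $P'$, and then deduce (a) and (b) as above.
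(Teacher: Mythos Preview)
Your plan is essentially the paper's own: first read off the sign of $\partial_{Q_0}G_2$ on the three regimes ($Q_0>0$, $Q_0$ near $0^-$, $Q_0\to-\infty$), then pass to the second derivative to bound the number of sign changes. The paper's uniqueness step is exactly the one you flag as the obstacle: it asserts that, after common factoring, $\partial_{Q_0}^2 G_2$ can be written as $P_1(Q_0)/P_2(Q_0)$ with $\deg P_1=2$ and $\deg P_2=4$, so $\partial_{Q_0}^2 G_2$ changes sign at most twice; combined with the boundary sign information this forces a single crossing $Q_0^-<0$. So you are aiming at the right computation --- the point is simply to carry it out and observe the low degree of the numerator, rather than to look for a separate monotonicity argument.

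One correction to your symmetry sketch for (b): under $(\delta,Q_0)\mapsto(-\delta,-Q_0)$ the argument $\dfrac{S_a+\delta Q_0}{S_b+\delta Q_0}$ is indeed invariant, but the prefactor $\delta$ flips, so the logarithmic term changes sign \emph{once}, not zero times. The rational term also flips sign (via $Q_0\mapsto-Q_0$), so altogether $\partial_{Q_0}G_2(-Q_0,A,-\delta)=-\,\partial_{Q_0}G_2(Q_0,A,\delta)$; this is the identity that transports (a) to (b) with $Q_0^+=-Q_0^-$. The paper simply says ``the other statement is similar,'' but your symmetry, once stated correctly, is the clean way to do it.
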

\begin{proof} We prove (a). The other statement is similar. 
It follows from \eqref{G} that $\partial_{Q_0}G_2$ for small (large) values of $Q_0<0$ has the same (opposite) sign as that of $l-r$. And for any values of $Q_0>0$, it always has the same sign as that of $l-r$.
On the other hand, taking one more derivative, $\partial^2_{Q_0}G_2$ can be written (after making common factors) in form of $\dfrac{P_1(Q_0)}{P_2(Q_0)}$ where $P_1$ and $P_2$ are two polynomials (in $Q_0$) of degrees $2$ and $4$ respectively. It is easy to see that $P_2(Q_0) \neq 0$ for any $Q_0$, and $P_1(Q_0)$ has at most two roots. Thus, $\partial^2_{Q_0}G_2$ may change sign at most twice. This completes the proof.
\end{proof}
\begin{cor}\label{Cor-Q}
A direct conclusion of Theorem \ref{Thm-G2} is that $Q_0^-$ and $Q_0^+$ take, respectively, the absolute minimum and maximum of $\partial_{Q_0}G_2$ for each case.
\end{cor}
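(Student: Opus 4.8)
The plan is to upgrade the \emph{sign} information in Theorem \ref{Thm-G2} to a statement about the \emph{global extrema} of the map $Q_0 \mapsto \partial_{Q_0}G_2(Q_0,A,\delta)$, using two facts that the proof of that theorem already produces: that $\partial_{Q_0}G_2 \to 0$ as $Q_0 \to \pm\infty$, and that $\partial^2_{Q_0}G_2$ changes sign at most twice, equivalently that $\partial_{Q_0}G_2$ has at most two critical points on $\bbR$. The decay at infinity is a one-line estimate: by \eqref{Nab}, $S_a = |Q_0| + \dfrac{A^2}{2|Q_0|} + O(|Q_0|^{-3})$ and likewise for $S_b$, so in \eqref{parG1G2} both $\ln\dfrac{S_a + \delta Q_0}{S_b + \delta Q_0}$ and $\dfrac{(1-\delta^2)Q_0(S_a - S_b)}{(S_a + \delta Q_0)(S_b + \delta Q_0)}$ are $O(|Q_0|^{-2})$, hence $\partial_{Q_0}G_2 \to 0$, the sign of the approach being governed by $B^2 - A^2$ and therefore by $l - r$, consistently with the sign table of Theorem \ref{Thm-G2}.

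With these in hand the argument is short. By Theorem \ref{Thm-G2} the continuous function $\partial_{Q_0}G_2$ is strictly positive somewhere and strictly negative somewhere (its sign is that of $l - r$ on one side of the threshold furnished by the theorem and that of $r - l$ on the other), so it is not monotone; since it also vanishes at $\pm\infty$, its supremum over $\bbR$ is positive and attained at an interior point, and its infimum is negative and attained at an interior point. Each such point is a critical point, so by the bound of at most two critical points there are \emph{exactly} two, namely the global maximizer and the global minimizer of $\partial_{Q_0}G_2$. It then remains to place these relative to the threshold of Theorem \ref{Thm-G2}: in case (a) ($\delta > 0$, and $l > r$ say), $\partial_{Q_0}G_2$ is negative on $(-\infty, Q_0^-)$ and positive on $(Q_0^-, \infty)$, so the negative-valued global minimizer lies in $(-\infty, Q_0^-) \subset (-\infty, 0)$ --- the point the corollary records as $Q_0^-$ --- while the positive-valued global maximizer lies in $(Q_0^-, \infty)$ and is $Q_0^+$ (the roles of minimum and maximum switching when $l < r$); case (b) is identical, anchored instead on the positive threshold $Q_0^+$ of Theorem \ref{Thm-G2}(b), and if one wants the maximizer itself to be positive that can be read off from the sign of $\partial^2_{Q_0}G_2$ at $Q_0 = 0$.

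The single step that is more than bookkeeping is the passage from ``at most two critical points plus a genuine sign change'' to ``exactly two, and they realize the \emph{absolute} minimum and maximum''. This is exactly where the two ingredients enter: the decay of $\partial_{Q_0}G_2$ at $\pm\infty$ prevents an extremum from escaping to infinity, so the supremum and infimum are attained on a compact set, hence at critical points; and $\partial^2_{Q_0}G_2$ changes sign \emph{exactly} twice --- at least once in order to produce the sign flip of $\partial_{Q_0}G_2$ asserted in Theorem \ref{Thm-G2}, and a second time so that $\partial_{Q_0}G_2$ can return to $0$ --- which guarantees that the two critical points are a bona fide minimum and a bona fide maximum, not degenerate or inflectional. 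Everything else is a direct transcription of the sign information already proved.
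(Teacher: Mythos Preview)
The paper gives no proof for this corollary --- it is presented as an immediate consequence of Theorem~\ref{Thm-G2}, and under the reading consistent with the rest of the paper it really is a one-line observation. The points $Q_0^-$ and $Q_0^+$ in the corollary are the \emph{same} points as in Theorem~\ref{Thm-G2}: the unique zero-crossings of $\partial_{Q_0}G_2$ (one arising in case (a), the other in case (b)). This is confirmed downstream: Theorem~\ref{Thm-A} opens with ``Let $Q_0^-<0$, $Q_0^+>0$ be as defined in Theorem~\ref{Thm-G2}'', and Theorem~\ref{thmA(Q)} and Corollary~\ref{corA} use $A(Q_0^{\pm})$ as the extreme value of $A$, which via $\partial_{Q_0}A=-\partial_{Q_0}G_2/\partial_AG_2$ occurs precisely where $\partial_{Q_0}G_2$ vanishes. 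So, despite the awkward wording, the corollary is recording that $G_2$ (equivalently $A$) attains its absolute extremum at $Q_0^-$ or $Q_0^+$; and that is immediate from the single sign change of $\partial_{Q_0}G_2$ established in Theorem~\ref{Thm-G2}.

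Your reading is different: you locate the global minimizer and maximizer of $\partial_{Q_0}G_2$ itself and then relabel those points as $Q_0^-$ and $Q_0^+$. But those are \emph{necessarily different} from the paper's $Q_0^{\pm}$, since at the paper's points $\partial_{Q_0}G_2=0$ while at yours it is strictly negative or positive. You flag this yourself (``the negative-valued global minimizer lies in $(-\infty,Q_0^-)$ \dots\ the point the corollary records as $Q_0^-$''), effectively overwriting the symbol that Theorem~\ref{Thm-G2} just defined --- which then breaks Theorems~\ref{Thm-A} and~\ref{thmA(Q)}, where $Q_0^{\pm}$ must be the zeros of $\partial_{Q_0}G_2$. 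The analysis you carry out (decay of $\partial_{Q_0}G_2$ at $\pm\infty$, at most two critical points, hence exactly one interior minimum and one interior maximum) is correct and is a finer structural statement than the paper ever needs, but it is not what the corollary is asserting and it produces the wrong points.
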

\begin{rem}
 We will see later in Theorem \ref{Thm-A} that $\partial_{Q_0}G_2$ and $\partial_{Q_0}A$ have the same behavior. Hence, Figure \ref{Fig-AvsQ} for $A$ may also be trusted to help the reader understand Theorem \ref{Thm-G2}.
 \qed
\end{rem}

\subsection{Results on Geometric Mean of Concentrations.}
We established the geometric mean of concentration on \eqref{A} and stated why we need to specify this parameter. The authors of \cite{ML19} prove that, for fixed $Q_0, D_1$ and $ D_2$, one can solve for $A$ from $G_2(Q_0, A,\delta )=0$, where $G_2$ is defined in (\ref{G}). Moreover, $A(0,\delta ) = (1-\alpha)l + \alpha r$  and $\lim_{Q_0 \to \pm \infty} A(Q_0,\delta ) = l$. Then, they showed that 
$l, r, A$ and $B$ satisfy one the following conditions: $l<A(Q_0,\delta )<B(Q_0,\delta )<r$ or 
 $ l>A(Q_0,\delta )>B(Q_0,\delta )>r$. If   $\delta Q_0\ge 0$, then   $\partial_{Q_0}A(Q_0,\delta )$ has the same sign as that of $(l-r)Q_0$.

 We now present a few more features of the geometric mean of concentrations, $ A $, through the following Theorems and a Corollary. The same results can be achieved for the other parameter, $B$. 
The following Lemma is directly obtained from the equations \eqref{G1G2Sys}--\eqref{al-be-thet}.
 
\begin{lem}\label{lem-sgnN} For any given $Q_0 \in \mathbb{R}$ and for any $A\in \big(0, A_M\big) $ where $A_M =l + \frac{\alpha }{1-\beta}r$ is the maximum value of $A$, the functions $G_1(Q_0,A, \delta)$ and $G_2(Q_0,A, \delta)$ in \eqref{G} are continuous. Furthermore, 
$\partial_A N >0$, $\partial_A B <0$.
\end{lem}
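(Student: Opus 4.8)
\textbf{Proof proposal for Lemma \ref{lem-sgnN}.}

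The plan is to verify the three assertions in turn, each being a short computation built directly on the formulas \eqref{G}--\eqref{al-be-thet}. First, for \emph{continuity} of $G_1$ and $G_2$ on the interval $A\in(0,A_M)$, I would trace through the definitions: the building blocks are $B=B(A)=\frac{1-\beta}{\alpha}(l-A)+r$ (an affine function of $A$), and $S_a=\sqrt{Q_0^2+A^2}$, $S_b=\sqrt{Q_0^2+B^2}$ (continuous since the radicands are positive), and then $N$, $G_1$, $G_2$ are built from these by sums, products, and logarithms. The only thing that could fail continuity is a logarithm of a non-positive quantity, so the substantive point is to check that each argument of a $\ln$ in \eqref{G} stays strictly positive on $(0,A_M)$. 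For the terms $\ln\frac{A}{B}$ and $\ln\frac{l}{r}$ this needs $A>0$ and $B>0$; positivity of $A$ is the hypothesis, and positivity of $B$ on $(0,A_M)$ follows because $B(A_M)$ hits $0$ exactly at the right endpoint $A_M=l+\frac{\alpha}{1-\beta}r$ (solving $B(A)=0$ gives precisely this $A$), and $B$ is decreasing, so $B>0$ for $A<A_M$. For the terms $\ln\frac{S_a+\delta Q_0}{S_b+\delta Q_0}$ and $\ln\frac{S_a-Q_0}{S_b-Q_0}$, note $|\delta|<1$ forces $S_a+\delta Q_0\ge S_a-|Q_0|\ge 0$ with equality only if $A=0$, hence strict positivity on $(0,A_M)$; similarly $S_a-Q_0=\sqrt{Q_0^2+A^2}-Q_0>0$ for $A>0$ and all $Q_0$. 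Collecting these, $G_1,G_2$ are compositions of continuous functions and hence continuous.

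Next, $\partial_A N>0$. From \eqref{Nab}, $N=A-l+S_a-S_b$, so $\partial_A N = 1 + \frac{A}{S_a} - \frac{B}{S_b}\,\partial_A B$. Since $\partial_A B = -\frac{1-\beta}{\alpha}<0$ (here I use $0<\alpha<\beta<1$, which is built into the ordering $0<a<b<1$ and the definition of $H$), the last term is $+\frac{1-\beta}{\alpha}\frac{B}{S_b}\ge 0$, and $\frac{A}{S_a}\ge 0$, so $\partial_A N\ge 1>0$.

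Finally $\partial_A B<0$ is immediate from the explicit affine form $B(A)=\frac{1-\beta}{\alpha}(l-A)+r$, giving $\partial_A B=-\frac{1-\beta}{\alpha}<0$, again using $0<\alpha$ and $\beta<1$. I do not anticipate a real obstacle here; the only point requiring a moment's care is pinning down that $A_M$ is exactly the zero of $B$ and that the signs of $\alpha,\beta,\delta$ (specifically $0<\alpha<\beta<1$ and $|\delta|<1$) guarantee every logarithm argument is positive and the sign of $\partial_A B$ is negative. I would state these sign conventions explicitly at the start of the proof and then the three claims follow by the short computations above.
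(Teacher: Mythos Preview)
Your proposal is correct and follows the same route the paper indicates: the paper offers no detailed proof, stating only that the lemma ``is directly obtained from the equations \eqref{G1G2Sys}--\eqref{al-be-thet},'' and what you have written is precisely the direct verification the paper alludes to. Your care in checking that each logarithm argument stays positive (using $|\delta|<1$ and the characterization of $A_M$ as the zero of $B$) and the explicit computation of $\partial_A N$ and $\partial_A B$ are exactly what is needed.
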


\begin{rem}
It might be deduced, incorrectly from Corollary \ref{Cor-Q} and Lemma \ref{lem-sgnN}, that $A_M$ is equal to $A(Q_0)$ evaluated at $Q_0=Q^-$, or $Q_0=Q^+$. However, the function $A(Q_0)$ is not surjective necessarily, and in general, $A(Q_0^{\pm})\leq A_M$ . In next Theorem, we discuss about the bounds of the geometric mean of concentration in more details.
\end{rem}
\begin{thm}\label{Thm-A}
Assume that diffusion constants $D_1$ and $D_2$ are fixed. Let $Q_0^-<0$, $Q_0^+>0$ be  as defined in Theorem \ref{Thm-G2}. Subsequently, \medskip

\noindent (a) if $D_1 <D_2$, i.e. $\delta >0$, then  $\partial_{Q_0} A(Q_0, \delta)$ has the same sign as that of $l-r$ when $Q_0>Q_0^-$, but it has the opposite sign when $Q_0 < Q_0^-$, and \medskip

\noindent(b) if $D_1 > D_2$, i.e. $\delta <0$, then  $\partial_{Q_0} A(Q_0,\delta)$ has the opposite sign as that of $l-r$ when $Q_0<Q_0^+$, but it has the same sign when $Q_0 > Q_0^+$.
\end{thm}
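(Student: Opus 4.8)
The plan is to differentiate the defining relation $G_2(Q_0, A(Q_0,\delta),\delta)=0$ implicitly and read off the sign of $\partial_{Q_0}A$ from the signs of the partial derivatives of $G_2$. By implicit differentiation,
\[
\partial_{Q_0}A(Q_0,\delta) = -\frac{\partial_{Q_0}G_2(Q_0, A,\delta)}{\partial_A G_2(Q_0,A,\delta)}.
\]
From \eqref{parG1G2} the denominator satisfies $\partial_A G_2 = -\tfrac{1-\beta}{\alpha}\tfrac{B}{S_b+\delta Q_0} - \tfrac{A}{S_a+\delta Q_0} - \tfrac{\beta-\alpha}{\alpha}$; since $0<\alpha<\beta<1$ and $S_a+\delta Q_0 = S_a - |\delta Q_0|\,\mathrm{sign}(\delta Q_0) > 0$ (indeed $S_a = \sqrt{Q_0^2+A^2} \ge |Q_0| > |\delta Q_0|$ because $|\delta|<1$), every term is negative, so $\partial_A G_2 < 0$ unconditionally — this is exactly the statement already recorded before Theorem \ref{Thm-G2} and in Lemma \ref{lem-sgnN}. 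Hence $\partial_{Q_0}A$ has the \emph{same} sign as $\partial_{Q_0}G_2$.

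With that reduction in hand, statement (a) of Theorem \ref{Thm-A} becomes literally the same as statement (a) of Theorem \ref{Thm-G2}, and similarly for (b). So the proof is: first establish the sign identity $\mathrm{sign}\,\partial_{Q_0}A = \mathrm{sign}\,\partial_{Q_0}G_2$ via the implicit function theorem (which applies precisely because $\partial_A G_2 \neq 0$, and the solvability of $G_2=0$ for $A$ is the cited result of \cite{ML19}), and then invoke Theorem \ref{Thm-G2} verbatim to get the stated sign behavior of $\partial_{Q_0}G_2$ relative to $l-r$ across the thresholds $Q_0^-$ and $Q_0^+$. One should double-check that the threshold $Q_0^-$ (resp. $Q_0^+$) is indeed the same point for $A$ as for $G_2$: this is automatic because a quotient $-\partial_{Q_0}G_2/\partial_A G_2$ with strictly negative, nonvanishing denominator changes sign exactly where the numerator does.

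The only genuine subtlety — and the step I would be most careful about — is confirming that $\partial_A G_2$ never vanishes on the relevant domain $A\in(0,A_M)$, since the whole argument collapses if the implicit function theorem fails somewhere. The potential worry is the constant term $-\tfrac{\beta-\alpha}{\alpha}$, which could in principle be cancelled; but the other two summands are strictly negative and bounded away from $0$ on compact subsets, and in fact $-\tfrac{1-\beta}{\alpha}\tfrac{B}{S_b+\delta Q_0} \le -\tfrac{1-\beta}{\alpha}\cdot\tfrac{B}{2S_b}$ which together with $-\tfrac{A}{S_a+\delta Q_0}<0$ already makes the sum strictly below $-\tfrac{\beta-\alpha}{\alpha}$ when, say, $A$ is bounded below — and near $A\to 0^+$ one checks the behavior of $B=B(A)$ separately. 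Once the non-vanishing of $\partial_A G_2$ is secured uniformly, the rest is a one-line transfer of Theorem \ref{Thm-G2}, and the accompanying remark about Figure \ref{Fig-AvsQ} is then justified.
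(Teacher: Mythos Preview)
Your proof is correct and is precisely the paper's own argument: the paper's proof is the single sentence ``This is the direct conclusion of Theorem~\ref{Thm-G2} with the facts that $\partial_{Q_0}A=-\partial_{Q_0}G_2/\partial_AG_2$ and $\partial_AG_2<0$ for any $Q_0$.'' Your extended discussion of the ``subtlety'' is unnecessary --- all three summands in $\partial_A G_2$ are strictly negative (the constant term $-(\beta-\alpha)/\alpha$ included, since $\beta>\alpha>0$), so there is no cancellation to worry about and $\partial_A G_2<0$ is immediate.
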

\begin{proof} This is the direct conclusion of Theorem \ref{Thm-G2} with the facts that $\partial_{Q_0}A=-\partial_{Q_0}G_2/\partial_AG_2$ and $\partial_AG_2<0$ for any $Q_0$.
\end{proof}
\begin{rem}
{\em Note that $Q_0^-$ and $Q_0^+$  in Theorems \ref{Thm-G2} and \ref{Thm-A} are the same.   Similarly, we emphasize that the same conclusions can be established for the case with $D_1 > D_2$.}
\end{rem}

\begin{thm}\label{thmA(Q)} For any given $Q=Q_0$ there is a unique $A(Q_0)$ such that $G_2(A(Q_0), Q_0)=0$. Furthermore, $\lim_{Q_0 \to \pm \infty} A(Q_0) = l$, and $A=A(Q_0)$ satisfies the following,
  
\noindent (a)~if $l<r$ then $l \leq A(Q_0) \leq A(Q_0^{\pm})$, \qquad (b)~if $l>r$ then $A(Q_0^{\pm})\leq A(Q_0) \leq l$,\\
(c)~if $l=r$ then  $A(Q_0) =l =r = A(Q_0^{\pm})$,\\
where $A(Q_0^{\pm})= A(Q^{+})$ or $A(Q_0^{\pm})= A(Q^{-})$ depending on the sign of diffusion constants; also, $Q_0^{+}$, and $Q_0^{-}$ were defined in Theorem \ref{Thm-A}.
\end{thm}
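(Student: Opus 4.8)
The plan is to establish the three claimed facts in sequence, leaning on the machinery already in place. First, for the existence and uniqueness of $A(Q_0)$ solving $G_2(A(Q_0),Q_0)=0$: Lemma \ref{lem-sgnN} tells us $G_2(Q_0,\cdot,\delta)$ is continuous on $(0,A_M)$ and, from \eqref{parG1G2} together with the remark following it (and Theorem \ref{Thm-A}'s proof), $\partial_A G_2<0$ for every $Q_0$, so $G_2$ is strictly decreasing in $A$. Thus it suffices to check the sign of $G_2$ at the two ends of the admissible $A$-interval: as $A\to 0^+$ and as $A\to A_M^-$ (equivalently $B\to$ its extreme). Using $N=A-l+S_a-S_b$ from \eqref{Nab} and $B=B(A)=\frac{1-\beta}{\alpha}(l-A)+r$, I would verify that $G_2$ changes sign across the interval — the logarithmic term $\delta Q_0\ln\frac{S_a+\delta Q_0}{S_b+\delta Q_0}$ is bounded while $N$ sweeps through a sign change as $A$ ranges over $(0,A_M)$ — so a unique root exists by the intermediate value theorem plus strict monotonicity. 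The limit $\lim_{Q_0\to\pm\infty}A(Q_0)=l$ is already recorded as a result of \cite{ML19} and I would simply cite it, or re-derive it by noting that $S_a+\delta Q_0,S_b+\delta Q_0$ and hence the log term behave so that forcing $G_2=0$ pins $N\to 0$, i.e. $A\to l$ (since $S_a-S_b\to 0$ faster as $Q_0\to\infty$).

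Second, for the bounds (a), (b), (c): the key input is Theorem \ref{Thm-A}, which gives the exact sign behavior of $\partial_{Q_0}A(Q_0,\delta)$ relative to the sign of $l-r$, with a single turning point at $Q_0^-$ (when $\delta>0$) or $Q_0^+$ (when $\delta<0$). Combined with Corollary \ref{Cor-Q} — that $Q_0^{\mp}$ realizes the absolute extremum of $\partial_{Q_0}G_2$, hence via $\partial_{Q_0}A=-\partial_{Q_0}G_2/\partial_A G_2$ also an extremum of $A(Q_0)$ in the appropriate sense — one concludes that $A(Q_0)$, as a function of $Q_0$ on $\mathbb{R}$, is monotone up to $Q_0^{\pm}$ and monotone in the opposite direction afterward, with the common limiting value $l$ at both $\pm\infty$. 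So $A(Q_0^{\pm})$ is the global extreme value of $A$ over all $Q_0$, and $l$ (the common tail value) is the other extreme. In the case $l<r$: since $A(0,\delta)=(1-\alpha)l+\alpha r>l$ and $A$ stays on one side, $l$ is the infimum and $A(Q_0^{\pm})$ the supremum, giving $l\le A(Q_0)\le A(Q_0^{\pm})$; the case $l>r$ is the mirror image; and $l=r$ forces $A\equiv l=r$ because then $N=0$ is solved by $A=l$ (which makes $B=r=l$ and $S_a=S_b$, so $G_2=0$ identically) and uniqueness finishes it.

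The main obstacle I anticipate is pinning down the sign of $G_2$ at the \emph{endpoints} $A\to 0^+$ and $A\to A_M^-$ cleanly enough to invoke the intermediate value theorem, because at those endpoints $B$ takes boundary values where $S_b$ and the logarithm may degenerate, and the sign of the bounded log term $\delta Q_0\ln\frac{S_a+\delta Q_0}{S_b+\delta Q_0}$ interacts subtly with the sign of $N$; one must be careful that $S_a\pm\delta Q_0$ and $S_b\pm\delta Q_0$ stay positive (they do, since $S_a=\sqrt{Q_0^2+A^2}\ge|Q_0|\ge|\delta Q_0|$ as $|\delta|<1$, and similarly for $S_b$), so the logarithms are well-defined throughout. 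A secondary subtlety is that the bound statements implicitly use that $A(Q_0)$ is not surjective onto $(0,A_M)$ — the remark before the theorem flags exactly this — so I must argue the extrema are $A(Q_0^{\pm})$ and $l$ rather than the formal endpoints; this follows from Theorem \ref{Thm-A} and Corollary \ref{Cor-Q} as above, but the write-up needs to make the "turning point is a global extremum, tails give the other extremum, no other critical points" logic explicit.
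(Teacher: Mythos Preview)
Your plan matches the paper's proof in structure: uniqueness is cited from \cite{ML19}, and the bounds (a)--(c) are deduced from Theorem~\ref{Thm-A} exactly as you outline. The one substantive difference is the limit $\lim_{Q_0\to\pm\infty}A(Q_0)=l$: the paper does \emph{not} cite \cite{ML19} here but computes it directly, and your heuristic re-derivation skips the delicate point. From $G_2=0$ you want $\delta Q_0\ln\frac{S_a+\delta Q_0}{S_b+\delta Q_0}\to 0$ to force $N\to 0$, but this is an $\infty\cdot 0$ indeterminate form --- $Q_0\to\pm\infty$ while the logarithm tends to zero --- and ``the log term behave so that\ldots'' does not resolve it. The paper handles this by writing the product as a quotient and differentiating (an $\textrm{L'H\^opital}$-type step), obtaining
\[
\lim_{Q_0\to\pm\infty} Q_0\ln\frac{S_a+\delta Q_0}{S_b+\delta Q_0}
= -\lim_{Q_0\to\pm\infty}\frac{\bigl(\tfrac{Q_0}{S_a}+\delta\bigr)(S_b+\delta Q_0)-\bigl(\tfrac{Q_0}{S_b}+\delta\bigr)(S_a+\delta Q_0)}{(S_a+\delta Q_0)(S_b+\delta Q_0)}\,Q_0^{2}=0,
\]
after which $N\to 0$ and $S_a-S_b\to 0$ give $A\to l$. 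If you keep your own route, either cite \cite{ML19} as you suggest, or make this indeterminate-form resolution explicit; the rest of your argument is fine.
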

\begin{proof}
The uniqueness of $A=A(Q_0)$ has been proven in \cite{ML19}.
To show the limit, since 
$\displaystyle\lim_{Q_0\to \pm \infty} S_a(Q_0)=S_b(Q_0),$
 it follows from $G_2(A(Q_0),Q_0)=0$ that,
$$
\begin{aligned}
\delta\lim_{Q_0\to \pm \infty}Q_0\ln\dfrac{S_a + p Q_0}{S_b + pQ_0}=& \dfrac{\beta - \alpha}{\alpha} \lim_{Q_0\to \pm \infty} (A-l).
\end{aligned}
$$
On the other hand,
$$
\begin{aligned}
\lim_{Q_0\to \pm \infty}Q_0 \ln\dfrac{S_a+pQ_0}{S_b+ pQ_0} =& -\lim_{Q_0\to \pm \infty}\dfrac{(\frac{Q_0}{S_a}+p)(S_b+p Q_0)-(\frac{Q_0}{S_b}+p)(S_a+p Q_0)}{(S_a+p Q_0)(S_b+p Q_0)} Q_0^2=0.
\end{aligned}
$$
Thus, $\lim_{Q_0 \to \pm \infty} A(Q_0) = l$. The parts (a)--(c) concludes from and Theorem \ref{Thm-A}.
\end{proof}

\begin{cor}\label{corA}  
The geometric mean of concentrations $A=A(Q_0)$ is always finite and bounded between $l$ and $A(Q_0^{\pm})$ for any values of $Q_0$. Similarly, the other geometric mean of concentrations $B(Q_0)$ is unique and  bounded between $r$ and $B(Q_0^{\pm})$. Besides,
$ \lim_{Q_0 \to \pm \infty} B(Q_0) = r.$ Furthermore, if $A$ is surjective, then $A(Q_0^{\pm}=A_M$ where $A_M$ was defined in Lemma \ref{lem-sgnN}; and $B(Q_0^{\pm}=B_M$ correspondingly.
\end{cor}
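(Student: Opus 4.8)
\textbf{Proof proposal for Corollary \ref{corA}.}

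The plan is to assemble the corollary as a direct consequence of Theorem \ref{thmA(Q)} together with the structural identity $B=B(A)=\frac{1-\beta}{\alpha}(l-A)+r$ already recorded after \eqref{A}. First I would observe that the boundedness and finiteness of $A=A(Q_0)$ between $l$ and $A(Q_0^{\pm})$ is \emph{verbatim} parts (a)--(c) of Theorem \ref{thmA(Q)}: in all three sign regimes for $l-r$, the value $A(Q_0)$ lies in the closed interval with endpoints $l$ and $A(Q_0^{\pm})$, and since both endpoints are finite this gives the first assertion with essentially nothing to prove. The uniqueness of $A(Q_0)$ is likewise quoted from \cite{ML19} via Theorem \ref{thmA(Q)}.

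Next I would transport everything to the variable $B$ using the affine relation $B=\frac{1-\beta}{\alpha}(l-A)+r$, which is a strictly decreasing bijection in $A$ (the coefficient $\frac{1-\beta}{\alpha}$ is positive since $0<\alpha<\beta<1$, consistent with $\partial_A B<0$ in Lemma \ref{lem-sgnN}). Uniqueness of $B(Q_0)$ is then immediate from uniqueness of $A(Q_0)$. For the bounds: when $A$ runs between $l$ and $A(Q_0^{\pm})$, the image under the affine map runs between the corresponding images; the image of $A=l$ is $B=\frac{1-\beta}{\alpha}\cdot 0 + r = r$, and the image of $A(Q_0^{\pm})$ is by definition $B(Q_0^{\pm})$. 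Because the map is monotone, $B(Q_0)$ lands in the closed interval with endpoints $r$ and $B(Q_0^{\pm})$, which is the claimed bound. The limit $\lim_{Q_0\to\pm\infty}B(Q_0)=r$ follows by applying the same affine map to $\lim_{Q_0\to\pm\infty}A(Q_0)=l$ from Theorem \ref{thmA(Q)} and using continuity of the affine map.

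Finally, the surjectivity clause: if $A(\cdot)$ is surjective onto the admissible range $\big(0,A_M\big)$ of Lemma \ref{lem-sgnN}, then by Theorem \ref{Thm-A} the monotonicity of $A(Q_0)$ on each of the two half-lines $Q_0\gtrless Q_0^{\pm}$ means the extreme value $A(Q_0^{\pm})$ is attained at the turning point and, surjectivity forcing that extremum to reach the boundary, one gets $A(Q_0^{\pm})=A_M$; applying the affine change of variables once more yields $B(Q_0^{\pm})=B_M$, where $B_M$ is the image of $A_M$ (equivalently, the extreme value of $B$). I expect the only mild subtlety — the ``main obstacle'' if any — to be pinning down precisely what ``surjective'' means here and checking that the turning point $Q_0^{\pm}$ really is where the sup/inf of $A$ is achieved; this is exactly what Corollary \ref{Cor-Q} and Theorem \ref{Thm-A} give us, so the argument stays short. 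Everything else is bookkeeping with the single affine substitution $A\mapsto B$.
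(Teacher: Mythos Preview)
Your proposal is correct and matches the paper's intent: the paper states Corollary~\ref{corA} without proof, treating it as an immediate consequence of Theorem~\ref{thmA(Q)} (for the bounds, uniqueness, and limit of $A$) together with the affine relation $B=\frac{1-\beta}{\alpha}(l-A)+r$, and you have spelled out exactly those steps. Your handling of the surjectivity clause via Theorem~\ref{Thm-A} and Corollary~\ref{Cor-Q} is also what the paper implicitly relies on.
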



Recall that the quantities $B=B(A)$ and $N=N(A)$ defined in \eqref{A} and \eqref{Nab}. One can simply see that $B-A$ is decreasing over domain of $A$. Besides, $B-A>0$ when $A\to 0^+$ (consequently $B>0$), and $B-A<0$ when $A\to A_M^-$ (consequently $B\to 0^+$).
Therefore, there exists some  $A^*\in (0, A_M)$  for which $B^*:=B(A^*)=  A^*$. Moreover, Lemma \ref{lem-sgnN} along with $N(0^+)<0, N(A_M^-)>0$ conclude that there exist an $\hat{A}(Q_0)\in (0, A_M)$  such that $N(\hat{A})=0$. The corresponding quantities $\hat{B}$ and $B^*$ are similarly defined.


\begin{lem}\label{lem-A*} For any given $Q_0$ one has,

\noindent(i)~If $l<r$ then $l < \hat{A} < A^*=B^* < \hat{B} < r$,\qquad 
(ii)~If $l>r$ then $l > \hat{A} > A^*=B^* > \hat{B} > r$,\\
(iii)~If $l=r$ then $\hat{A}= \hat{B} = A^* =B^* =l =r $.
\end{lem}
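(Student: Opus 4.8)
The plan is to establish the chain of inequalities by pinning down the two distinguished values $A^*$ (where $B(A^*)=A^*$) and $\hat A$ (where $N(\hat A)=0$), and comparing them with $l$ and $r$ using the monotonicity facts already collected in Lemma \ref{lem-sgnN} and the discussion preceding the statement. I would treat the case $l<r$ in detail (case $l>r$ being symmetric, and $l=r$ degenerate and immediate), so fix $Q_0\in\mathbb{R}$ throughout.

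First I would locate $A^*=B^*$ relative to $l$ and $r$. Recall $B(A)=\frac{1-\beta}{\alpha}(l-A)+r$, so the equation $B(A)=A$ has the explicit solution $A^*=\dfrac{(1-\beta)l+\alpha r}{1-\beta+\alpha}$, which is a convex combination of $l$ and $r$ (since $0<\alpha<\beta<1$ forces both coefficients positive with sum $1-\beta+\alpha>0$). Hence $l<r$ gives $l<A^*<r$, and $B^*=A^*$ sits in the same open interval. Next, evaluating the affine map $B$ at the endpoints: $B(l)=\frac{1-\beta}{\alpha}(l-l)+r=r$, so $B(l)=r$; and more generally, since $\partial_A B<0$ by Lemma \ref{lem-sgnN}, $A<A^*\iff B(A)>A^*=B^*$ and $A>A^*\iff B(A)<B^*$. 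In particular $A=\hat A$ and $A=A^*$ will be ordered oppositely to $\hat B,B^*$, which is what produces the "inside-out" nesting in the claim.

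Then I would locate $\hat A$, where $N(\hat A)=0$, using $\partial_A N>0$ (Lemma \ref{lem-sgnN}) so that $N$ is strictly increasing and $\hat A$ is unique. The key computation is $N$ evaluated at $A=l$: from \eqref{Nab}, $N(l)=l-l+S_a(l)-S_b(l)=\sqrt{Q_0^2+l^2}-\sqrt{Q_0^2+B(l)^2}=\sqrt{Q_0^2+l^2}-\sqrt{Q_0^2+r^2}$, using $B(l)=r$ from the previous step. When $l<r$ this is negative, so $N(l)<0=N(\hat A)$, and monotonicity of $N$ gives $l<\hat A$. Combining $l<\hat A$, $l<A^*$ with the relation "$A\mapsto B(A)$ reverses order about $A^*$": once I also show $\hat A<A^*$ (equivalently $N(A^*)>0$), I get $\hat B:=B(\hat A)>B(A^*)=B^*=A^*$, and the claimed string $l<\hat A<A^*=B^*<\hat B<r$ follows — the last inequality $\hat B<r$ because $\hat A>l$ and $B$ is decreasing with $B(l)=r$. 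So the crux reduces to the sign of $N(A^*)$: at $A=A^*$ we have $B=A^*$, hence $S_a=S_b$, hence $N(A^*)=A^*-l>0$ since $l<A^*$. That closes the loop.

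\textbf{Main obstacle.} The only delicate point is bookkeeping the direction-reversals cleanly: $B(A)$ is decreasing in $A$ while $N(A)$ is increasing, and the target inequalities interleave $\hat A,A^*$ (both "$A$-side") with $\hat B,B^*$ (both "$B$-side"), so one must be careful that showing $\hat A<A^*$ genuinely yields $\hat B>\hat B^{\!*}=A^*$ and not the reverse. I expect the whole argument to go through with the three anchor evaluations $B(l)=r$, $N(A^*)=A^*-l$, and $N(l)=\sqrt{Q_0^2+l^2}-\sqrt{Q_0^2+r^2}$, together with the monotonicities from Lemma \ref{lem-sgnN}; no perturbation analysis or properties of $Q_0$ beyond $Q_0\in\mathbb{R}$ are needed, since $S_a,S_b$ enter only through the sign facts above. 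The case $l>r$ is verbatim with all inequalities reversed, and $l=r$ forces $A^*=l=r$ and $N(l)=0$ so $\hat A=l$, giving (iii).
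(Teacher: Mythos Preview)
Your proposal is correct and closely parallels the paper's argument, with one organizational difference worth noting. Both proofs begin by evaluating $N$ at $A=l$ (using $B(l)=r$) and invoking the monotonicity of $N$ to place $\hat A$ relative to $l$. To then locate $A^*$ relative to $\hat A$, the paper (working the case $l>r$) first shows $\hat B<\hat A$ directly from $N(\hat A)=0$ --- since $\hat A<l$ forces $S_a(\hat A)>S_b(\hat A)$ --- and then sandwiches $A^*$ between $\hat B$ and $\hat A$ by checking the signs of $f(A):=B(A)-A$ at those two points, using the identity $\hat B=B(\hat A)$ to compute $f(\hat B)=\tfrac{1-\beta}{\alpha}(\hat A-\hat B)$. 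You instead evaluate $N$ directly at $A=A^*$: since $B(A^*)=A^*$ forces $S_a=S_b$ there, one gets $N(A^*)=A^*-l$, whose sign immediately orders $\hat A$ and $A^*$; the rest follows from $B$ being decreasing with $B(l)=r$. Your route is marginally more direct and avoids introducing the auxiliary function $f$; the paper's route yields the comparison $\hat A\gtrless\hat B$ as a separate byproduct. Either way the same monotonicity facts from Lemma~\ref{lem-sgnN} do all the work.
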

\begin{proof}
Suppose $r <l$. Since $N(A=l) > 0,~N(A=\hat{A})=0$ and $N$ is increasing with respect to $A$ then we get $\hat{A}<l$. Now, it follows from definition of $B$, $N(\hat{A})=0$ and $\hat{A}<l$ that $r<\hat{B}<\hat{A}$.\\
Now, set $f(A):= B - A.$ Then,  $f(\hat{A})= \hat{B}- \hat{A}<0$ and 
$$
f(\hat{B})= \Big(\dfrac{1-\beta}{\alpha}(l-\hat{B}) + r \Big) - \Big(\dfrac{1-\beta}{\alpha}(l-\hat{A}) + r \Big) = \dfrac{1-\beta}{\alpha} (\hat{A} - \hat{B}) >0.
$$
Thus, it follows from $\partial_Af<0$  and  $f(A^*)=0$ that $\hat{B} < A^*=B^* <\hat{A}$.
The other cases are similar.
\end{proof}

\begin{thm}\label{Thm-Asym}
For equal diffusion constant $D_1=D_2$, the solution $A(Q_0,0)$ of $G_2(A(Q_0),Q_0,0)=0$ is symmetric with respect to $Q_0$; but for any $(Q_0,\delta)$ where $\delta\neq 0$, i.e. $D_1 \neq D_2$, the function $A(Q_0,\delta)$ is non-symmetric with respect to $Q_0$.
\end{thm}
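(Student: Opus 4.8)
\textbf{Proof proposal for Theorem \ref{Thm-Asym}.}

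The plan is to treat the two assertions separately, both working directly from the defining relation $G_2(A(Q_0),Q_0,\delta)=0$ with $G_2$ as in \eqref{G}, \eqref{Nab}. For the symmetric case $\delta=0$, note first that $G_2(Q_0,A,0)=-N=-(A-l+S_a-S_b)$, since the first term of $G_2$ carries the factor $\delta$. Thus $A(Q_0,0)$ is determined by the single equation $A-l+\sqrt{Q_0^2+A^2}-\sqrt{Q_0^2+B^2}=0$, where $B=B(A)=\frac{1-\beta}{\alpha}(l-A)+r$ does not depend on $Q_0$ and both $S_a=\sqrt{Q_0^2+A^2}$ and $S_b=\sqrt{Q_0^2+B^2}$ are even in $Q_0$. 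Therefore the whole equation defining $A$ is invariant under $Q_0\mapsto -Q_0$; since by Theorem \ref{thmA(Q)} the solution $A(Q_0)$ is unique for each $Q_0$, uniqueness forces $A(-Q_0,0)=A(Q_0,0)$, which is the claimed symmetry.

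For the non-symmetric case $\delta\neq0$, the plan is a proof by contradiction: suppose $A(Q_0,\delta)=A(-Q_0,\delta)$ for all $Q_0$ (it suffices to contradict this on any interval, or even at one value where a derivative test applies). Write $\Delta(Q_0):=G_2(Q_0,A,\delta)$ evaluated along the putative common solution. Because $S_a,S_b$ and $B$ are even in $Q_0$ while the leading term $\delta Q_0\ln\frac{S_a+\delta Q_0}{S_b+\delta Q_0}$ is the only $Q_0$-odd-sensitive piece, subtracting the equation at $Q_0$ from the equation at $-Q_0$ (using that $A$, hence $B$, $S_a$, $S_b$, and $N$ all agree at $\pm Q_0$ under the hypothesis) yields
\begin{equation*}
\delta Q_0\left(\ln\frac{S_a+\delta Q_0}{S_b+\delta Q_0}-\ln\frac{S_a-\delta Q_0}{S_b-\delta Q_0}\right)=0.
\end{equation*}
For $\delta\neq0$ and $Q_0\neq0$ this reduces to $\frac{(S_a+\delta Q_0)(S_b-\delta Q_0)}{(S_b+\delta Q_0)(S_a-\delta Q_0)}=1$, i.e. $(S_a+\delta Q_0)(S_b-\delta Q_0)=(S_b+\delta Q_0)(S_a-\delta Q_0)$, which expands to $\delta Q_0(S_b-S_a)=0$. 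Hence either $\delta=0$, $Q_0=0$, or $S_a=S_b$. The last possibility means $A^2=B^2$, i.e. $A=B$ (both positive), which by Lemma \ref{lem-A*} happens only at the single value $A=A^*=B^*$; and by Theorem \ref{thmA(Q)} and Theorem \ref{Thm-A} the map $Q_0\mapsto A(Q_0,\delta)$ attains each value at most on a small set (it is strictly monotone on each side of $Q_0^{\pm}$ and has limit $l\neq A^*$ as $Q_0\to\pm\infty$), so $S_a=S_b$ cannot hold on a full interval. This contradiction establishes non-symmetry for all $\delta\neq0$.

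The main obstacle I anticipate is making the contradiction argument airtight rather than merely formal: one must be careful that subtracting the $\pm Q_0$ equations is legitimate, which requires knowing a priori that $A(\pm Q_0,\delta)$ agree (that is exactly the negated hypothesis) and that the intermediate quantities $B,S_a,S_b,N$ are genuinely functions of $A$ and $|Q_0|$ alone — this is visible from \eqref{Nab} and from $B=B(A)$ being $Q_0$-independent. A cleaner alternative, avoiding the "for all $Q_0$" assumption, is to differentiate $G_2(Q_0,A(Q_0,\delta),\delta)=0$ implicitly and compare $\partial_{Q_0}A$ at $Q_0$ and $-Q_0$ using \eqref{parG1G2}: one shows $\partial_{Q_0}A(Q_0,\delta)+\partial_{Q_0}A(-Q_0,\delta)\neq0$ in general by exhibiting the asymmetry already at the level of the $Q_0$-derivative near $Q_0=0$, where $\partial_{Q_0}G_2$ contains the term $\delta\ln\frac{S_a+\delta Q_0}{S_b+\delta Q_0}$ whose sign structure (governed by Theorem \ref{Thm-G2}, with $Q_0^-<0$ strictly and $Q_0^+>0$ strictly) is manifestly not odd about $0$. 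I would present the contradiction version as the main proof and mention the derivative version as a remark, since the former is shorter and uses only the already-established uniqueness and bound results.
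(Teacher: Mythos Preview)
Your approach is essentially the same as the paper's --- both arguments simply inspect the structure of $G_2$ in \eqref{G} with respect to $Q_0$ --- but the paper's proof is a one-line assertion (``it directly follows from $G_2=0$''), whereas you actually supply the details. One minor slip: when you subtract the equations at $\pm Q_0$, the sign inside the parentheses should be $+$ rather than $-$, giving
\[
\delta Q_0\Big(\ln\tfrac{S_a+\delta Q_0}{S_b+\delta Q_0}+\ln\tfrac{S_a-\delta Q_0}{S_b-\delta Q_0}\Big)=0,
\]
which reduces to $S_a^2-\delta^2Q_0^2=S_b^2-\delta^2Q_0^2$ and hence still $S_a=S_b$, so the contradiction goes through unchanged.
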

\begin{proof}
It directly follows from $G_2=0$ in  \eqref{G1G2Sys} and \eqref{G}.
\end{proof}

Figures \ref{Fig-AvsQ} and \ref{Fig-AvsV} confirms the discussions in Theorems \ref{Thm-A}, \ref{thmA(Q)} and \ref{Thm-Asym}.
 In what follows, numerical simulations are conducted with the help of analysis on system (\ref{G1G2Sys}). The combination of numerics and analysis gives a better understanding of the zero-current problems and compliments some analytical results obtained in \cite{ML19}.
 For our numerical simulations, we assume that $h(x)= k$ for any $x\in [a,b]$ with $a=1/3, b=2/3$ in the right panel in Figure \ref{Fig-hx}, where $ 0 < k < 1$. We further suppose that $h(0)= h(1)=1$, and assume $h(x)$ is approximated by two linear functions over the non-constant intervals $(0,1/3)$ and $(2/3,1)$. Thus, $h(x)$ is defined as a piece-wise linear function over $[0,1]$.
Then it follows from definition of $H(x)$ in \eqref{al-be-thet} that
$$
H(0)=0, \quad H(a)=\dfrac{a\ln k}{k-1}, \quad H(b)=\dfrac{a\ln k}{k-1} + \dfrac{b-a}{k}, \quad H(1)=\dfrac{(1-b+a)\ln k}{k-1} + \dfrac{b-a}{k}.
$$
Thus, for $0 < k <1$,
\begin{equation}\label{abk}
\begin{aligned}
\alpha =& \alpha (a,b,k) = \dfrac{ak\ln k}{(1-b+a)k\ln k + (k-1)(b-a)},  \\
\beta = & \beta (a,b,k) = \dfrac{ak\ln k+(k-1)(b-a)}{(1-b+a)k\ln k + (k-1)(b-a)}.
\end{aligned}
\end{equation}

\noindent Since $0<a<b<1$, it follows from above that $ 0<\alpha < \beta$. The following Theorem is the direct conclusion of \eqref{abk}.
\begin{thm}\label{thm-abk}
	One has,\vspace*{-.05in}\\
(i) if $a \to 0^+$, then $\alpha \to 0^+, \beta \to \dfrac{(k-1)b}{(1-b)k\ln k + (k-1)b} $. For this case,\vspace*{-.05in}
		\begin{center}
			 $k\to 0^+$ if and only if $\beta \to 1^-$, and  $k\to 1^-$ if and only if $\beta \to b$;
		\end{center}\vspace*{-.05in}
	\noindent (ii) if $b \to 1^-$, then $\beta \to 1^-, \alpha \to \dfrac{ak\ln k}{k-1} $. For this case,
		\begin{center}\vspace*{-.05in}
			 $k\to 0^+$ if and only if $\alpha \to 0^+$, and
			$k\to 1^-$ if and only if $\alpha \to a$.
			\end{center}\vspace*{-.05in}
Furthermore, when $a$ and $b$ are close, one obtains that all values  $\alpha, \beta, a$ and $b$ are close too.
\end{thm}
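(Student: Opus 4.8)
The plan is to read every assertion directly off the closed forms \eqref{abk} for $\alpha(a,b,k)$ and $\beta(a,b,k)$, so the work is a chain of elementary limits once two preliminaries are in hand. First I would note that on the admissible range $0<a\le b<1$, $0<k<1$ the common denominator $(1-b+a)k\ln k+(k-1)(b-a)$ is the sum of the strictly negative term $(1-b+a)k\ln k$ and the nonpositive term $(k-1)(b-a)$, hence is strictly negative and never vanishes; consequently $\alpha$ and $\beta$ are continuous in $(a,b,k)$ and every limit below may be taken termwise in numerator and denominator. Second I would record the auxiliary function $\psi(k):=\dfrac{k\ln k}{k-1}$ on $(0,1)$: from $\lim_{k\to0^+}k\ln k=0$ one gets $\psi(0^+)=0$; differentiating numerator and denominator (L'H\^opital) gives $\psi(1^-)=\lim_{k\to1}(\ln k+1)=1$; and $\psi'(k)=\dfrac{(k-1)-\ln k}{(k-1)^2}>0$ on $(0,1)$ by the elementary inequality $\ln k<k-1$. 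Thus $\psi$ is continuous and strictly increasing from $0$ to $1$.

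For part (i) I would let $a\to0^+$ in \eqref{abk}: the numerator $ak\ln k$ of $\alpha$ tends to $0$ while its denominator tends to $(1-b)k\ln k+(k-1)b<0$, so $\alpha\to0$, and since $\alpha>0$ this is $\alpha\to0^+$; for $\beta$, numerator and denominator tend to $(k-1)b$ and $(1-b)k\ln k+(k-1)b$, which is the stated limit. Dividing that limit through by $k-1$ rewrites it as $g(k):=\dfrac{b}{(1-b)\psi(k)+b}$. Since $b\in(0,1)$ and $\psi>0$ on $(0,1)$ we have $g(k)<1$ (which is why the limit reads $1^{-}$ rather than $1$), and strict monotonicity of $\psi$ makes $g$ strictly decreasing with $g(0^+)=b/b=1$ and $g(1^-)=b/((1-b)+b)=b$; strict monotonicity together with these endpoint values then yields the equivalences $\beta\to1^-\iff k\to0^+$ and $\beta\to b\iff k\to1^-$.

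Part (ii) I would handle symmetrically: letting $b\to1^-$ and taking termwise limits in \eqref{abk} gives $\beta\to1^-$ and a limit for $\alpha$ of the form $\dfrac{ak\ln k}{ak\ln k+(k-1)(1-a)}$, whose reduction I would check carefully against the stated $\dfrac{ak\ln k}{k-1}$. In any case, dividing numerator and denominator by $k-1$ exhibits the limit as a strictly increasing function of $\psi(k)$, hence of $k$, with value $0$ as $k\to0^+$ and $a$ as $k\to1^-$, so $\alpha\to0^+\iff k\to0^+$ and $\alpha\to a\iff k\to1^-$. For the closing claim I would set $b=a$ in \eqref{abk}: the denominator collapses to $k\ln k$, so $\alpha=\beta=ak\ln k/(k\ln k)=a$; continuity then forces $\alpha$ and $\beta$ to be close to $a$ (hence to $b$ and to each other) when $b-a$ is small, and quantitatively $\beta-\alpha=\dfrac{(1-k)(b-a)}{|(1-b+a)k\ln k+(k-1)(b-a)|}\to0$ as $b-a\to0$.

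I expect no conceptual obstacle: the only non-bookkeeping ingredient is the strict monotonicity of $\psi$ (equivalently the inequality $\ln k<k-1$), which then propagates to the one-variable reductions $g(k)$ and $a\psi(k)$. The part that genuinely needs care is tracking signs and the common denominator so that the termwise passage to the limit and the one-sided refinements of each limit are legitimate, and in particular verifying the precise algebraic form of the limit of $\alpha$ in (ii) against \eqref{abk} before invoking the monotonicity argument.
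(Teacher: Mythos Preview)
Your approach is exactly what the paper intends: it offers no proof beyond the line ``direct conclusion of \eqref{abk},'' and your argument simply makes that passage explicit by taking termwise limits in \eqref{abk} and using the monotonicity of $\psi(k)=k\ln k/(k-1)$ to upgrade the endpoint limits to equivalences. The argument is correct.

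One point worth flagging: your suspicion about part (ii) is justified. Substituting $b\to 1^-$ directly into \eqref{abk} gives
\[
\alpha\;\longrightarrow\;\frac{ak\ln k}{ak\ln k+(k-1)(1-a)}\;=\;\frac{a\psi(k)}{a\psi(k)+(1-a)},
\]
which is not the displayed $\dfrac{ak\ln k}{k-1}=a\psi(k)$ except at the endpoints. Both expressions are strictly increasing in $k$ on $(0,1)$ with the same limits $0$ at $k\to 0^+$ and $a$ at $k\to 1^-$, so the ``if and only if'' conclusions are unaffected; the discrepancy is only in the intermediate formula as stated. Your handling---computing the honest limit and then running the monotonicity argument on it---is the right fix.
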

\begin{figure}[h]
\centerline{\epsfxsize=3.0in
\epsfbox{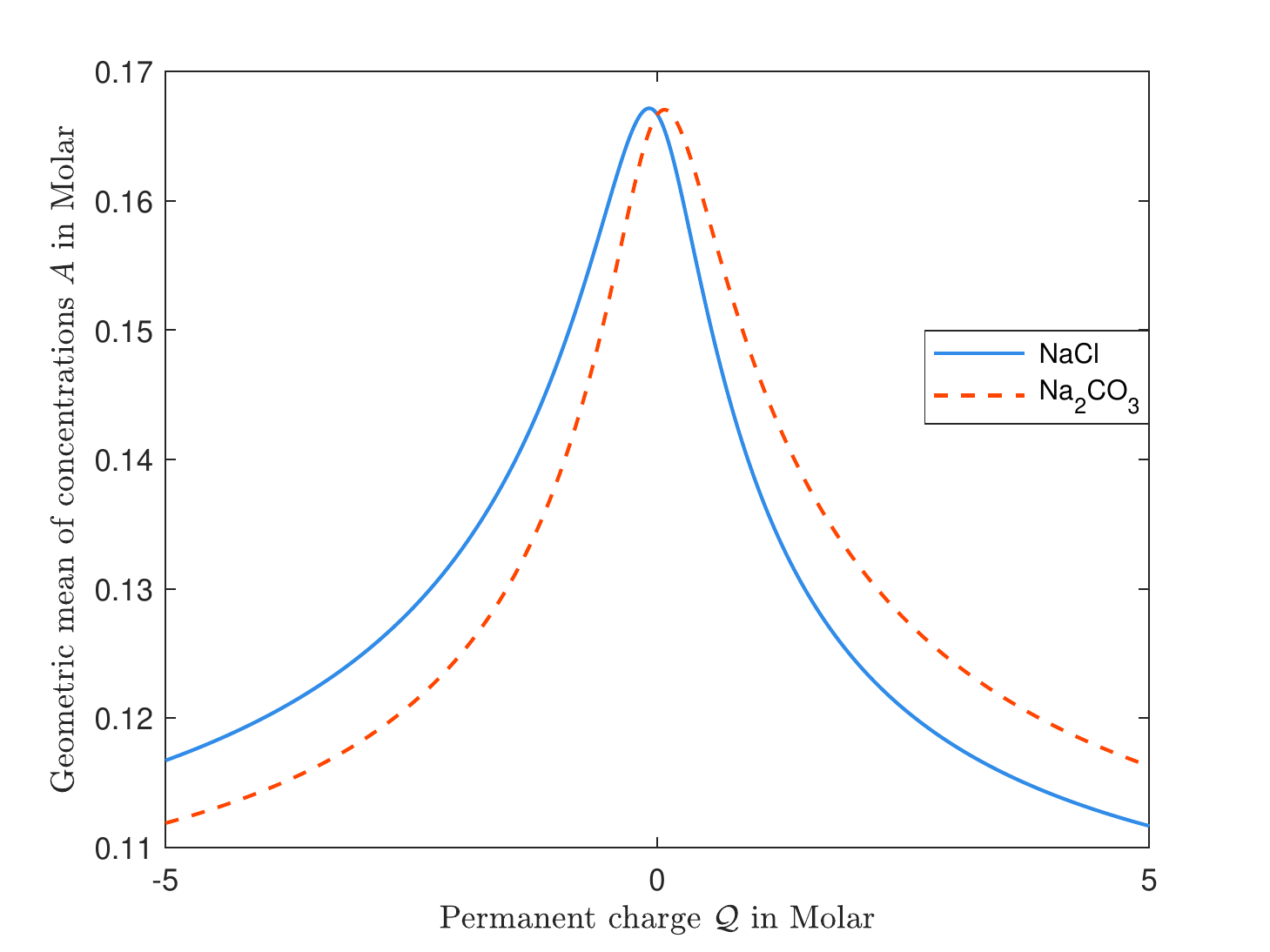} 
\epsfxsize=3.0in \epsfbox{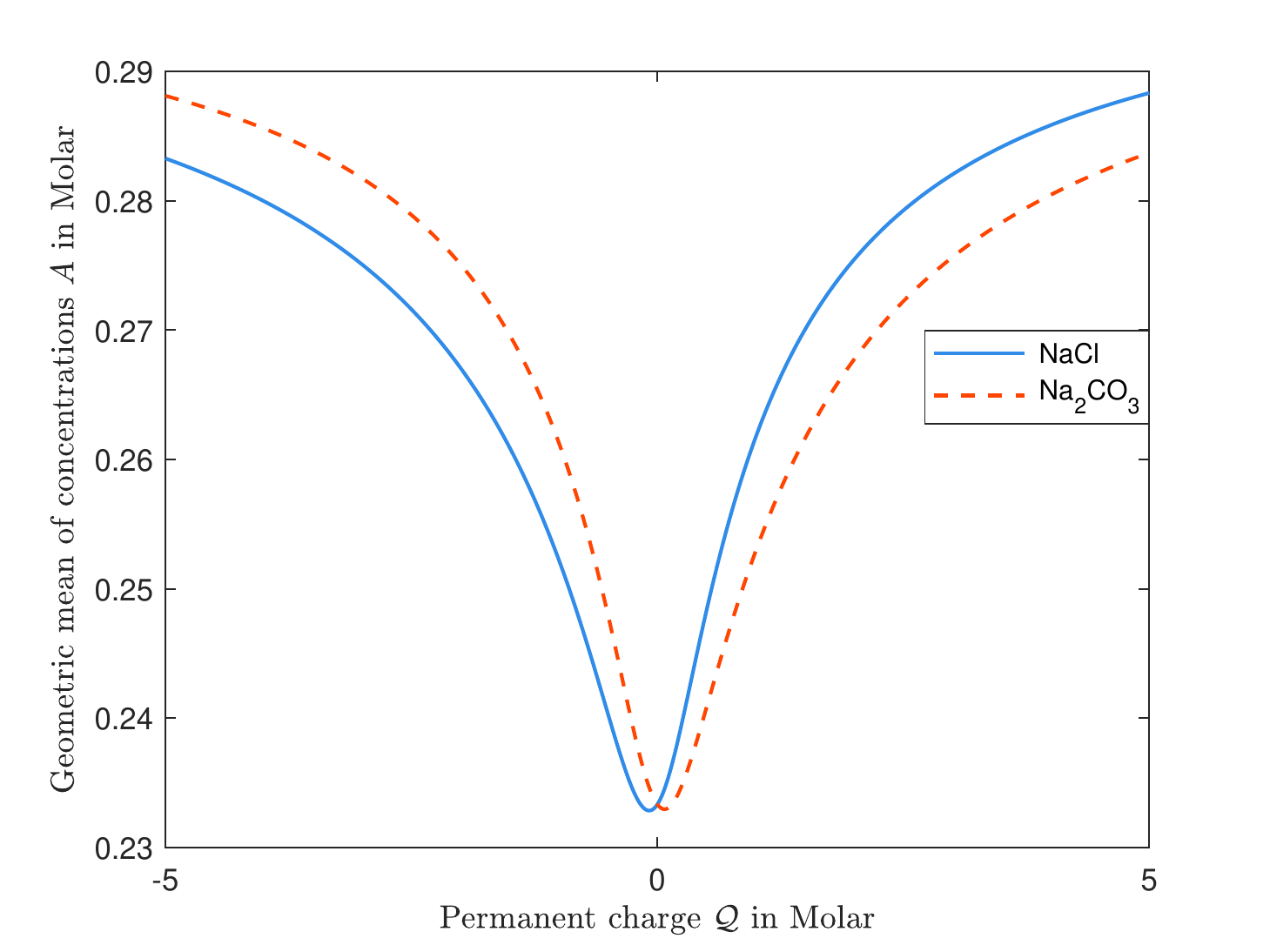} 
}
\caption{\em  The function $A(Q_0,\delta)$  for two pairs of  $(D_1,D_2)$ corresponding to $NaCl$ and $Na_2CO_3$, for various values of $\mathcal{Q}$: left panel for $L=0.1 M, R=0.3 M$; right panel for $L=0.3 M, R=0.1 M$.  }  
\label{Fig-AvsQ}
\end{figure}

\begin{figure}[h]
\centerline{\epsfxsize=3.0in
\epsfbox{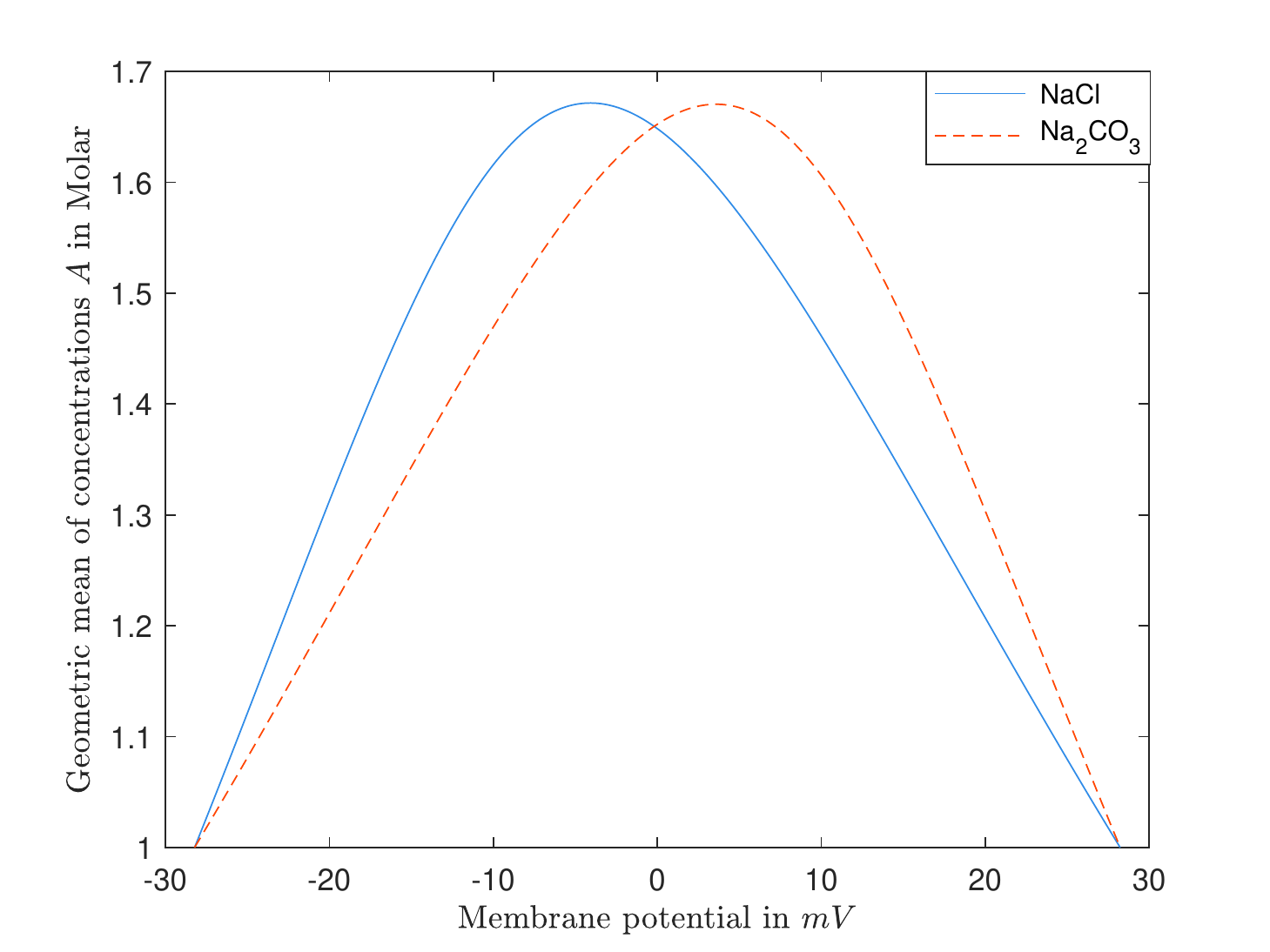} 
\epsfxsize=3.0in \epsfbox{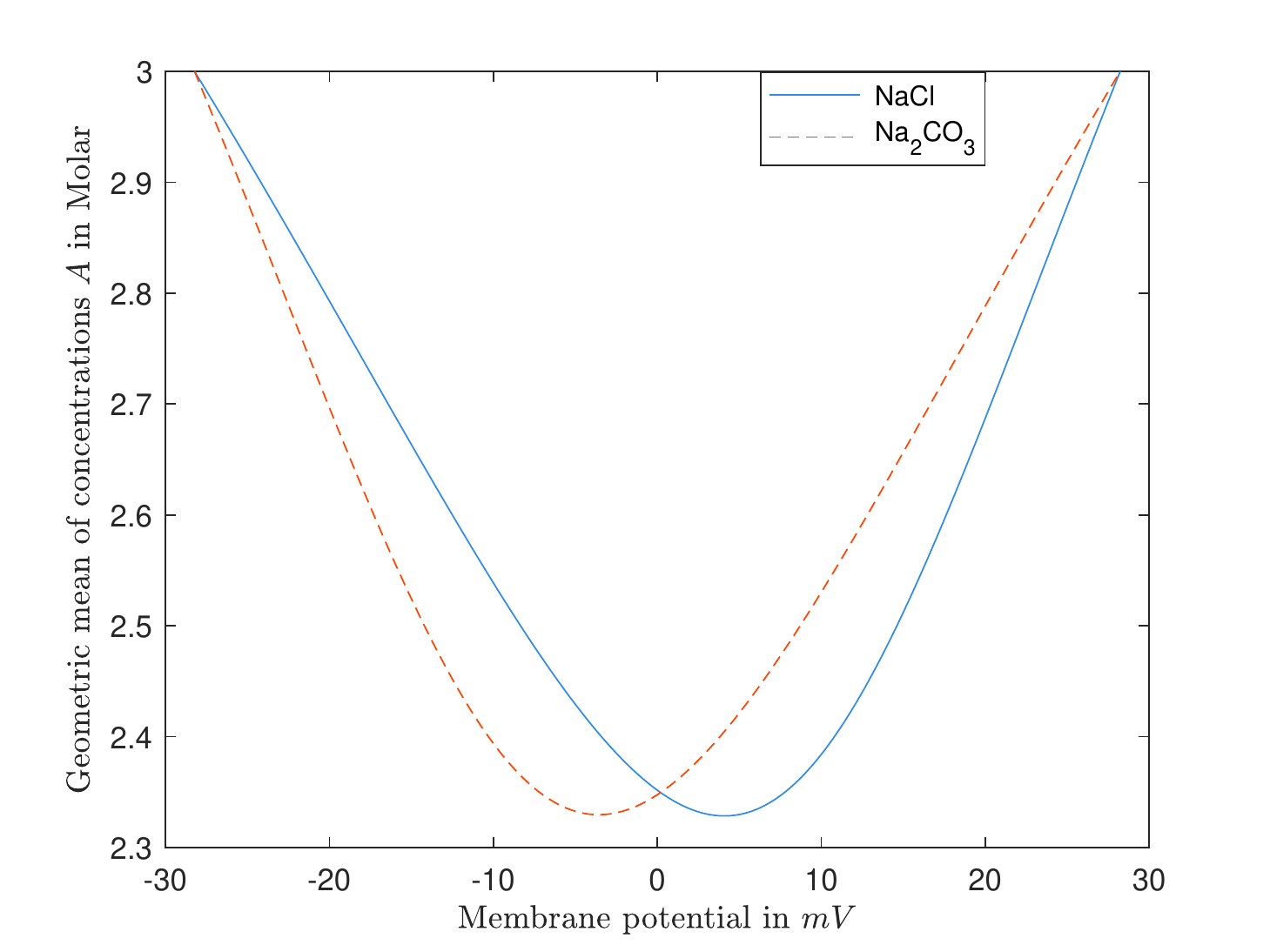} 
}
\caption{\em  The function $A(Q_0,\delta)$  for two pairs of  $(D_1,D_2)$ corresponding to $NaCl$ and $Na_2CO_3$ for various values of $\mathcal{V}$: left panel for $L=0.1 M, R=0.3 M$; right panel for $L=0.3 M, R=0.1 M$.  }
\label{Fig-AvsV}
\end{figure}
 
 \pagebreak  
 

\subsection{Results on Reversal Permanent Charge $Q_{rev}$.}\label{revPC}

The spatial distribution of side chains in a specific channel defines the permanent charge of the channel, which forms most of the electrical structure of the channel protein. Thus, reversal potentials should always exist within the ion channels. However, there is a simple necessary condition for the existence of the reversal permanent charge $Q_{rev}$, as one will see in the next Theorem. 
The general result for reversal permanent charge with a given electric potential $V$ is as follows. 

\begin{thm}
 For $n=2$, there exists a unique  reversal permanent charge $Q_{rev}$ if and only if 
\begin{equation}\label{Qrev-NecCond}
\big|V\big| < \Big|\ln\dfrac{l}{r}\Big|.
\end{equation}
\end{thm}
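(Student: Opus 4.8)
The plan is to collapse the coupled system \eqref{G1G2Sys} to a single scalar equation in $Q_0$ and to analyze it by continuity, the intermediate value theorem, and a monotonicity estimate. By Theorem~\ref{thmA(Q)}, for every $Q_0\in\mathbb{R}$ there is a unique $A=A(Q_0)$ with $G_2(Q_0,A(Q_0),\delta)=0$; moreover $A(\cdot)$ is $C^1$ since $\partial_A G_2<0$ everywhere by \eqref{parG1G2} and $G_2$ is smooth. Hence a reversal permanent charge for a prescribed $V$ is precisely a zero of the continuous function $g(Q_0):=G_1(Q_0,A(Q_0),\delta)-V$, and it suffices to prove that $Q_0\mapsto G_1(Q_0,A(Q_0),\delta)$ is a continuous strictly monotone bijection of $\mathbb{R}$ onto the open interval with endpoints $\pm\big|\ln(l/r)\big|$.

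Next I would pin down the endpoint behaviour. Using $A(Q_0)\to l$ and $B(Q_0)\to r$ as $Q_0\to\pm\infty$ (Corollary~\ref{corA}), together with the elementary limits $\frac{S_a+\delta Q_0}{S_b+\delta Q_0}\to 1$ (valid as $Q_0\to\pm\infty$ because $|\delta|<1$), $\frac{S_a-Q_0}{S_b-Q_0}\to\frac{l^2}{r^2}$ as $Q_0\to+\infty$, and $\frac{S_a-Q_0}{S_b-Q_0}\to 1$ as $Q_0\to-\infty$ (which follow from $S_a-Q_0=A^2/(S_a+Q_0)$ and $S_a\sim|Q_0|$), a direct substitution in \eqref{G} yields
\[
\lim_{Q_0\to-\infty}G_1(Q_0,A(Q_0),\delta)=-\ln\frac{l}{r},\qquad
\lim_{Q_0\to+\infty}G_1(Q_0,A(Q_0),\delta)=\ln\frac{l}{r},
\]
while at $Q_0=0$, where $S_a=A$ and $S_b=B$, the $\ln(A/B)$-terms cancel and $G_1(0,A,\delta)=\delta\ln(l/r)$ independently of $A$. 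Since $|\delta|<1$ (because $D_1,D_2>0$), these three values lie in the right order for a monotone profile, so the intermediate value theorem already shows the image of $G_1(\cdot,A(\cdot),\delta)$ contains the open interval between $\pm\ln(l/r)$; strict monotonicity, proved next, then identifies the image exactly and yields uniqueness.

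For the monotonicity I would differentiate along $\{G_2=0\}$:
\[
\frac{d}{dQ_0}G_1(Q_0,A(Q_0),\delta)=\partial_{Q_0}G_1+\partial_A G_1\,A'(Q_0),
\qquad A'(Q_0)=-\frac{\partial_{Q_0}G_2}{\partial_A G_2},
\]
so this derivative equals $\mathcal{J}/\partial_A G_2$ with $\mathcal{J}:=\partial_{Q_0}G_1\,\partial_A G_2-\partial_A G_1\,\partial_{Q_0}G_2$. From \eqref{parG1G2}: $\partial_A G_2<0$ always; $\partial_{Q_0}G_1$ has the sign of $S_a-S_b$, hence of $A-B$, hence of $l-r$ (recall $l<A<B<r$ or $l>A>B>r$), so it has a fixed sign; and $\partial_A G_1$ has the sign of $Q_0$. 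Combining these with Theorem~\ref{Thm-A} (for the sign of $A'(Q_0)$) shows that off the interval $(Q_0^-,0)$ when $\delta>0$ --- resp.\ $(0,Q_0^+)$ when $\delta<0$ --- the two summands $\partial_{Q_0}G_1$ and $\partial_A G_1\,A'(Q_0)$ both carry the sign $\sign(l-r)$, so the derivative does not vanish there. On the remaining interval the two summands pull against each other, and the statement reduces to showing that $\mathcal{J}$ keeps the sign $-\sign(l-r)$ there, i.e.\ (since every factor other than $\partial_{Q_0}G_2$ has a definite sign on that interval, the sign of $\partial_{Q_0}G_2$ itself being fixed by Theorem~\ref{Thm-G2}) to the magnitude inequality $\big|\partial_A G_1\,\partial_{Q_0}G_2\big|<\big|\partial_{Q_0}G_1\,\partial_A G_2\big|$.

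I expect this last inequality --- equivalently, the non-vanishing of the Jacobian $\mathcal{J}$ on the short interval where the monotone-driving terms compete --- to be the main obstacle: it is the one place cancellation is possible, so it cannot be settled by reading off signs of individual factors, and a genuine (if elementary) estimate is needed, controlling the logarithmic term in $\partial_{Q_0}G_2$ against the algebraic factors in \eqref{parG1G2}. Granting it, $Q_0\mapsto G_1(Q_0,A(Q_0),\delta)$ is continuous and strictly monotone with $\lim_{Q_0\to-\infty}=-\ln(l/r)$ and $\lim_{Q_0\to+\infty}=\ln(l/r)$, hence a bijection of $\mathbb{R}$ onto $\big(-|\ln(l/r)|,\,|\ln(l/r)|\big)$; therefore $G_1(Q_0,A(Q_0),\delta)=V$ has a solution, and then a unique one, exactly when $|V|<|\ln(l/r)|$, which is \eqref{Qrev-NecCond}. (If $l=r$ the target interval degenerates to a point and $G_1\equiv 0$ on $\{G_2=0\}$, so no $V$ admits a unique reversal permanent charge, again in accordance with \eqref{Qrev-NecCond}.)
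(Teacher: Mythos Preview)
Your approach is the same as the paper's: use Theorem~\ref{thmA(Q)} to eliminate $A$ and study the scalar map $Q_0\mapsto G_1(Q_0,A(Q_0),\delta)$. The paper's proof is two sentences---existence is cited from \cite{ML19}, and uniqueness is asserted to follow from Theorems~\ref{Thm-A}, \ref{thmA(Q)}, and the formula for $\partial_{Q_0}G_2$ in \eqref{parG1G2}---so your endpoint computations $G_1\to\pm\ln(l/r)$ as $Q_0\to\pm\infty$ and the value $G_1(0,A,\delta)=\delta\ln(l/r)$ already go well beyond what the paper records here.

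You are also right that the sign of $\mathcal{J}=\partial_{Q_0}G_1\,\partial_A G_2-\partial_A G_1\,\partial_{Q_0}G_2$ on the competing interval ($(Q_0^-,0)$ for $\delta>0$, respectively $(0,Q_0^+)$ for $\delta<0$) is the residual difficulty: there the two terms of $\mathcal{J}$ carry opposite signs, and reading off the signs of the individual factors in \eqref{parG1G2} does not decide which dominates. The paper does not supply that estimate either. It simply points to its monotonicity theorems, and in the subsequent Theorem~\ref{Thm-Qrev}---where the same Jacobian sits in the denominator of $\partial_{V_0}Q_{rev}$---again says only that the sign ``can be verified'' from \eqref{parV-Q} and Theorem~\ref{Thm-A}. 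So the gap you flag is genuine and is precisely the step the paper's brevity conceals; the missing estimate presumably lives in \cite{ML19}, to which the existence half is already deferred.
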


\begin{proof}
Existence of permanent charge $Q_{rev}$ have been proved in \cite{ML19}. The uniqueness is the consequence of Theorems \ref{Thm-A}, \ref{thmA(Q)} and $\partial_{Q_0}G_2$ in \eqref{parG1G2}.
\end{proof}


%
%

\begin{thm}\label{Thm-Qrev}
	For any given $(V_0, l, r)$ that satisfies the condition \eqref{Qrev-NecCond} one has,\\  
	(a)  if $l < r$, then $\partial_{V_0}Q_{rev} <0$ and $\lim\limits_{V_0 \to \pm\ln \frac{l}{r} } Q_{rev}(V_0)= \mp\infty$;\\
(b) if $l>r $, then $\partial_{V_0}Q_{rev} >0$ and $\lim\limits_{V_0 \to \pm\ln \frac{l}{r} } Q_{rev}(V_0)= \pm\infty$
\end{thm}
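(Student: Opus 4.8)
\medskip
\noindent\textbf{Proof strategy.}
The plan is to collapse the $2\times 2$ system \eqref{G1G2Sys} into one scalar equation in $Q_0$, show that the resulting scalar map is a strictly monotone homeomorphism of $\mathbb{R}$ onto the admissible interval $\big(-|\ln\frac{l}{r}|,\,|\ln\frac{l}{r}|\big)$, and then transport the monotonicity and the blow-up to its inverse $Q_{rev}$. By Theorem \ref{thmA(Q)}, for every $Q_0\in\mathbb{R}$ the equation $G_2(Q_0,A,\delta)=0$ has a unique root $A=A(Q_0)$, and since $\partial_A G_2<0$ everywhere by \eqref{parG1G2}, the implicit function theorem makes $A(\cdot)$ a $C^1$ map. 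Set $\Psi(Q_0):=G_1\big(Q_0,A(Q_0),\delta\big)$. Then $Q_0$ is a reversal permanent charge for $V_0$ iff $\Psi(Q_0)=V_0$, and by the preceding theorem this is uniquely solvable exactly when $|V_0|<|\ln\frac{l}{r}|$; hence $Q_{rev}=\Psi^{-1}$ there, and everything reduces to the behaviour of $\Psi$.

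First I would compute the endpoint limits of $\Psi$. By Theorem \ref{thmA(Q)} and Corollary \ref{corA}, $A(Q_0)\to l$ and $B(Q_0)\to r$ as $Q_0\to\pm\infty$; writing $S_X-Q_0=X^2/(S_X+Q_0)$ for $(X,S_X)\in\{(A,S_a),(B,S_b)\}$, one finds $\ln\frac{S_a-Q_0}{S_b-Q_0}\to 2\ln\frac{l}{r}$ as $Q_0\to+\infty$ and $\to 0$ as $Q_0\to-\infty$, while $\ln\frac{S_a+\delta Q_0}{S_b+\delta Q_0}\to 0$ in both limits. Substituting into \eqref{G} gives $\Psi(Q_0)\to\ln\frac{l}{r}$ as $Q_0\to+\infty$ and $\Psi(Q_0)\to-\ln\frac{l}{r}$ as $Q_0\to-\infty$, so $\Psi$ sweeps across $\big(-|\ln\frac{l}{r}|,\,|\ln\frac{l}{r}|\big)$; a direct check also yields $\Psi(0)=\delta\ln\frac{l}{r}$, which lies strictly inside that interval.

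For monotonicity, implicit differentiation of $G_1=V_0,\ G_2=0$ in $V_0$ gives $\partial_{V_0}Q_{rev}=\partial_A G_2/\mathcal{D}$, where $\mathcal{D}:=\partial_{Q_0}G_1\,\partial_A G_2-\partial_A G_1\,\partial_{Q_0}G_2$ is the Jacobian of $(Q_0,A)\mapsto(G_1,G_2)$; since $\partial_A G_2<0$ everywhere, it suffices to show $\mathcal{D}$ never vanishes and $\sign(\mathcal{D})=-\sign(l-r)$. Using $\partial_{Q_0}G_2=\delta\ln\frac{S_a+\delta Q_0}{S_b+\delta Q_0}+Q_0\,\partial_{Q_0}G_1$ from \eqref{parG1G2} and the identity $\frac{X}{S_X+\delta Q_0}+\frac{(1-\delta^2)Q_0^2}{X(S_X+\delta Q_0)}=\frac{S_X-\delta Q_0}{X}$, one rewrites
\[
\mathcal{D}=-(1-\delta^2)\left[\frac{K\,(S_a-S_b)}{(S_a+\delta Q_0)(S_b+\delta Q_0)}+\delta Q_0\,M\,\ln\frac{S_a+\delta Q_0}{S_b+\delta Q_0}\right],
\]
with $K=\frac{S_a-\delta Q_0}{A}+\frac{1-\beta}{\alpha}\cdot\frac{S_b-\delta Q_0}{B}+\frac{\beta-\alpha}{\alpha}>0$ and $M=\frac{1}{A(S_a+\delta Q_0)}+\frac{1-\beta}{\alpha}\cdot\frac{1}{B(S_b+\delta Q_0)}>0$ (both positive since $0<\alpha<\beta<1$ and $S_a\pm\delta Q_0,\,S_b\pm\delta Q_0>0$). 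By the ordering $l<A(Q_0)<B(Q_0)<r$ (if $l<r$) or $l>A(Q_0)>B(Q_0)>r$ (if $l>r$) of Theorem \ref{thmA(Q)}, both $S_a-S_b$ and $\ln\frac{S_a+\delta Q_0}{S_b+\delta Q_0}$ carry the sign of $l-r$; hence as long as $\delta Q_0\ge 0$ both bracket terms do, so $\sign(\mathcal{D})=-\sign(l-r)$, i.e. $\partial_{V_0}Q_{rev}<0$ for $l<r$ and $>0$ for $l>r$ on $\{\delta Q_0\ge 0\}$.

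The delicate regime is $\delta Q_0<0$, where the two bracket terms compete; this is the main obstacle. Take $\delta>0$ (the case $\delta<0$ is symmetric, with $Q_0^+$ in place of $Q_0^-$), so $\delta Q_0<0$ means $Q_0<0$. For $Q_0<Q_0^-$, Theorem \ref{Thm-G2} gives $\sign(\partial_{Q_0}G_2)=-\sign(l-r)$, and then $\mathcal{D}=\partial_{Q_0}G_1\,\partial_A G_2-\partial_A G_1\,\partial_{Q_0}G_2$ is a sum of two terms each of sign $-\sign(l-r)$ (using $\sign(\partial_{Q_0}G_1)=\sign(l-r)$, $\sign(\partial_A G_1)=\sign(Q_0)$, $\partial_A G_2<0$), so $\mathcal{D}$ has the required sign. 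Together with $\partial_{Q_0}G_2(Q_0^-)=0$ (hence $\partial_{Q_0}A(Q_0^-)=0$) and $\partial_A G_1(0)=0$, this shows $\Psi'=\mathcal{D}/\partial_A G_2$ has the correct sign on $\mathbb{R}\setminus(Q_0^-,0)$ and that $\sign(\mathcal{D})$ is already correct at both endpoints of the remaining compact window. To finish—ruling out a sign change of $\mathcal{D}$ inside $(Q_0^-,0)$—I would use the uniqueness of $Q_{rev}$ already established: $\Psi$ is injective on the $\Psi$-preimage of $\big(-|\ln\frac{l}{r}|,\,|\ln\frac{l}{r}|\big)$, and any interior critical point of $\Psi$ would lie in $(Q_0^-,0)$ and would, by the intermediate value theorem together with the limits above and $\Psi(0)$ being interior, force some value of $\big(-|\ln\frac{l}{r}|,\,|\ln\frac{l}{r}|\big)$ to have two preimages—a contradiction. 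Hence $\Psi$ is a strictly decreasing (resp. increasing, if $l>r$) homeomorphism of $\mathbb{R}$ onto $\big(-|\ln\frac{l}{r}|,\,|\ln\frac{l}{r}|\big)$, so $Q_{rev}=\Psi^{-1}$ has $\partial_{V_0}Q_{rev}$ of the stated sign and $Q_{rev}(V_0)\to\pm\infty$ as $V_0$ tends to the two endpoints $\pm\ln\frac{l}{r}$ (the pairing fixed by $\sign(\partial_{V_0}Q_{rev})$), which is (a) and (b). The one genuinely hard point is the non-vanishing of $\mathcal{D}$ in the regime $\delta Q_0<0$, i.e. the absence of an interior critical point of $\Psi$.
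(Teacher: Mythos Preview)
Your approach coincides with the paper's: both differentiate the system \eqref{G1G2Sys} implicitly to obtain
\[
\partial_{V_0}Q_{rev}=\frac{\partial_A G_2}{\partial_{Q_0}G_1\,\partial_A G_2-\partial_A G_1\,\partial_{Q_0}G_2},
\]
and then argue the sign from the partial derivatives in \eqref{parG1G2} together with Theorem~\ref{Thm-A}. The paper's proof stops there with a one-line appeal; you go considerably further, computing the endpoint limits $\Psi(\pm\infty)=\pm\ln(l/r)$ explicitly, rewriting $\mathcal{D}$ as a manifestly signed combination on $\{\delta Q_0\ge 0\}$, and splitting the remaining half-line at $Q_0^-$ (resp.\ $Q_0^+$) using Theorem~\ref{Thm-G2}. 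That decomposition is correct and is more than the paper actually supplies.

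The one soft spot is exactly the point you flag at the end. Your injectivity argument on the window $(Q_0^-,0)$ excludes a \emph{sign change} of $\mathcal{D}$ (equivalently of $\Psi'$), since a local extremum of $\Psi$ would give some admissible $V_0$ two preimages and contradict the uniqueness theorem. But injectivity of $\Psi$ does not by itself rule out an isolated zero of $\Psi'$ without sign change (think $Q_0\mapsto Q_0^3$); at such a point $\partial_{V_0}Q_{rev}=\partial_A G_2/\mathcal{D}$ would fail to be finite rather than have a definite sign. This does not affect the strict monotonicity of $Q_{rev}$ or the blow-up at the endpoints $\pm\ln(l/r)$, so the substantive conclusions of (a) and (b) stand; only the literal pointwise inequality $\partial_{V_0}Q_{rev}<0$ (resp.\ $>0$) everywhere would require one more step, namely showing $\mathcal{D}\neq 0$ throughout the window. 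The paper's own proof does not address this regime explicitly either, so your argument is at least as complete as the original.
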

\begin{proof}
	For any given $V_0$ from $G_1\big(V_0, Q_{rev}(V_0), A(Q_{rev}(V_0)) \big)=0$ we have,
	$$
	-1 + \partial_{Q_0}G_1 \partial_{V_0}Q_{rev} + \partial_{A}G_1 \partial_{Q_0}A \partial_{V_0}Q_{rev} =0.
	$$
	It follows from above and $\partial_{Q_0}A = - \dfrac{\partial_{Q_0}G_2}{\partial_{A}G_2}$ that,
	\begin{equation}\label{parV-Q}
	\begin{aligned}
	\partial_{V_0}Q_{rev} =& \dfrac{ \partial_{A}G_2}{\partial_{Q_0}G_1\partial_{A}G_2 -  \partial_{A}G_1 \partial_{Q_0}G_2}.
	\end{aligned}
	\end{equation}
The parts (a) and (b)statements can be verified from \eqref{parV-Q}, and Theorem \ref{Thm-A}.
\end{proof}
The numerical investigations in Figure \ref{Fig-Qrev} admits our results in Theorem \ref{Thm-Qrev}.
It shows the graph of $Q_{rev} $ for values of $V_0$ where $- \ln \frac{l}{r} \leq V_0 \leq \ln \frac{l}{r}$. One can see that when $V_0$ go to $\pm \ln \frac{l}{r}$  then $Q_{rev}$ becomes large unbounded.
	\begin{figure}[h]\label{Fig-Qrev}
		\centerline{\epsfxsize=3.0in
			\epsfbox{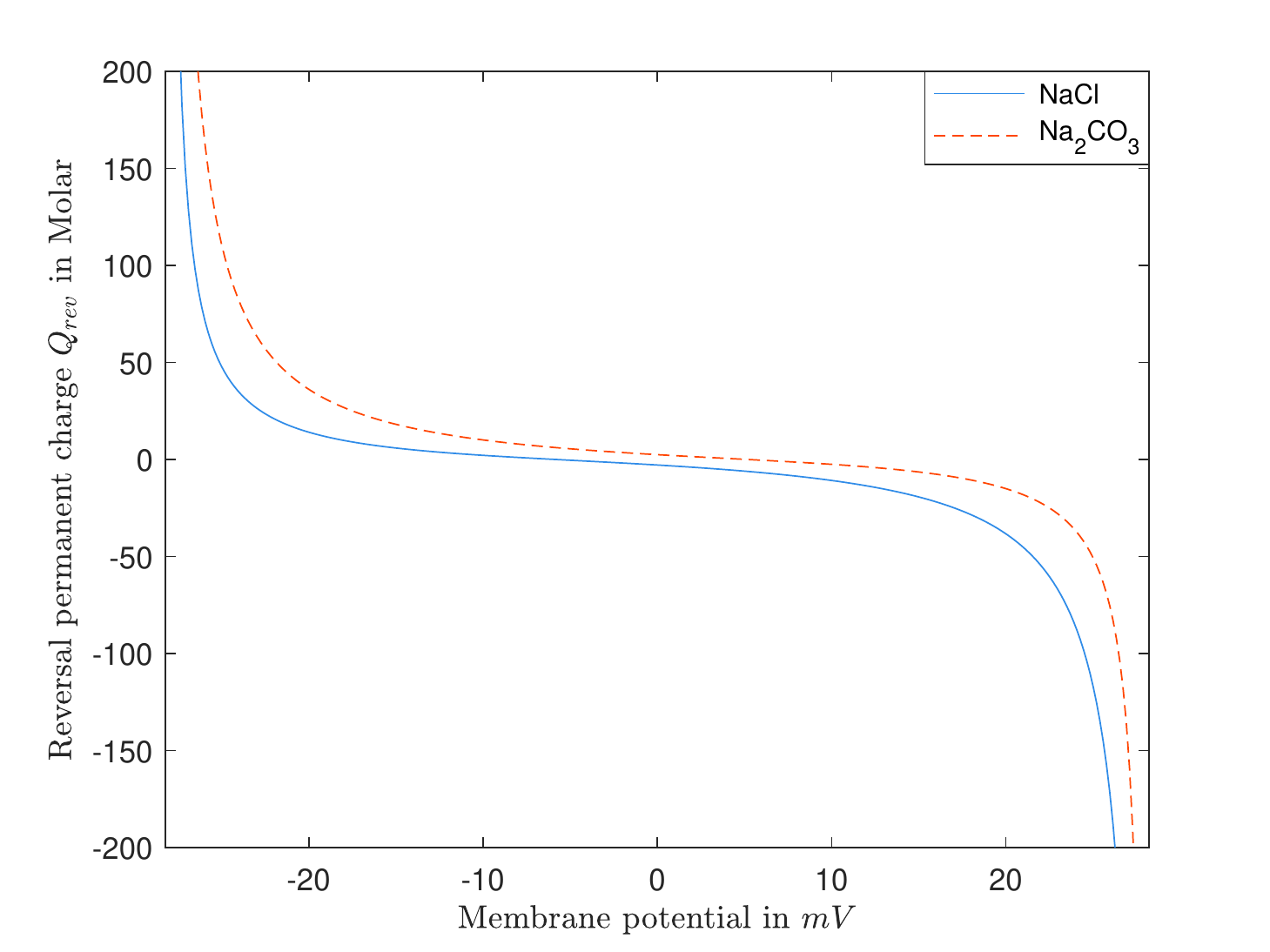} 
			\epsfxsize=3.0in \epsfbox{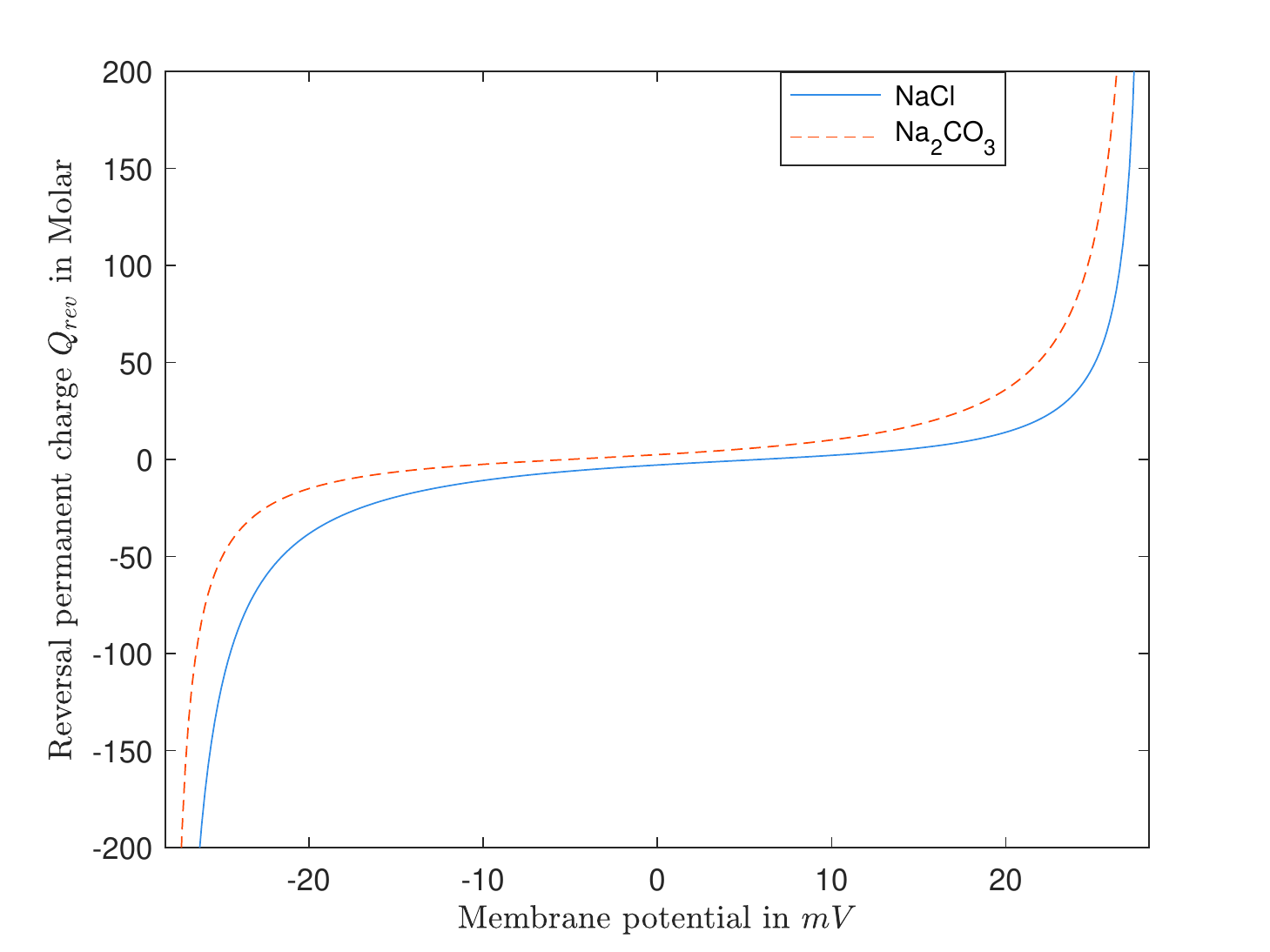}}\vspace*{-.1in}
		\caption{\em The function $Q_{rev}$  for two pairs of  $(D_1,D_2)$ corresponding to $NaCl$ and $Na_2CO_3$: left panel for $L=0.1 M < R=0.3 M$; right panel for $L=0.3 M > R=0.1 M$. }
		\label{Fig-Qrev}
	\end{figure}

 \section{Concluding Remarks.}\label{ConSec}
 In this manuscript, we work on the classical PNP model allowing unequal  diffusion constants and for a single profile of permanent charges, to study the specific questions about reversal potentials and reversal permanent charges that are among the central issues of biological concerns. 
 
   A crucial assumption is that the dimensionless parameter $\varepsilon$ of the ratio of the Debye length over the distance between the two applied electrodes  is  small. The assumption allows one to treat the PNP system as a singularly perturbed system with $\varepsilon$ as the singular parameter.  
   
Our study relies on a modern general geometric singular perturbation theory and some unique structures of the classical PNP models. Then we obtain a nonlinear matching system of algebraic equations \eqref{Matching} for the zero current condition that includes both the reversal potential and reversal permanent charge topics. We use an intermediate variable introduced in \cite{EL07}, to further reduce the matching system to an effective system of two algebraic equations with two unknowns. Several novel properties of biological significance have resulted from the analysis of these governing equations that some are not intuitive.
In the future, we intend to maintain the study to extend the analysis in this work and explore the problems numerically applying more advanced and complex models. More numerical observations can also be presented to study profiles of relevant physical quantities, e.g., to numerically investigate the behavior of $C_k(X)$, $\Phi(X)$ and  $\mu_k(X)$ throughout the channel. There are many other inspiring projects one can initiate related to this work. Another exciting novel project combines ion size to the problem using hard-sphere electrochemical potentials to analyze ion size's effects.



\noindent 
{\bf Acknowledgement.}  The author thanks Dr. Weishi Liu for beginning such an exciting topic, for his support and advice.

 \small
 
  \bibliographystyle{plain}

\end{document}